\newcommand{\version}{Dezember, 2013}     
\swapnumbers \pagestyle{myheadings}
\theoremstyle{plain}
\newtheorem{thm}{THEOREM}
\newtheorem{cor}[thm]{COROLLARY}
\newtheorem{lem}[thm]{LEMMA}
\newtheorem{define}[thm]{DEFINITION}
\newtheorem{proposition}[thm]{PROPOSITION}
\newcommand{\beq}{\begin{equation}}
\newcommand{\eeq}{\end{equation}}
\def\beqa{\begin{eqnarray}}
\def\eeqa{\end{eqnarray}}
\newcommand{\Tr}{{\rm Tr}}
\newcommand{\lspan}{{\rm lin}}
\newcommand{\C}{{\mathbb C}}
\newcommand{\R}{{\mathbb R}}
\newcommand{\one}{{\mathbbm 1}}
\newcommand{\Tf}{{\mathfrak{T}}}
\newcommand{\M}{{\mathcal M}}
\newcommand{\Q}{{\mathcal Q}}
\newcommand{\F}{{\mathcal F}}
\newcommand{\Tt}{{\mathcal T}}
\renewcommand{\L}{{\mathcal L}}
\newcommand{\eps}{\varepsilon}
\newcommand{\BB}{{\mathscr B}}
\newcommand{\HH}{{\mathscr H}}
\newcommand{\DD}{{\mathscr D}}
\newcommand{\CC}{{\mathscr C}}
\newcommand{\LL}{{\mathscr L}}
\newcommand{\tdt}{\!\cdot\!}
\newcommand{\half}{\mbox{$\frac{1}{2}$}} 
\def\bce{\begin{center}}
\def\ece{\end{center}}
\def\bit{\begin{itemize}}
\def\eit{\end{itemize}}
\date{\small\version}
\begin{document}
\markboth{\scriptsize{ Generators of Quantum Dynamical Semigroups \qquad  \version}}
         {\scriptsize{ Generators of Quantum Dynamical Semigroups \qquad  \version}}

\title{
\vspace{-80pt}
\bf{Generalized Kraus Operators and Generators of Quantum Dynamical Semigroups}}
\author{\vspace{30pt} Sabina Alazzawi$^1$, Bernhard Baumgartner$^2$
\\
\vspace{-4pt}
\small{Fakult\"at f\"ur Physik, Universit\"at Wien}\\
\small{Boltzmanngasse 5, A-1090 Vienna, Austria}}

\maketitle


\begin{abstract}
Quantum dynamical semigroups play an important role in the description of physical processes such as diffusion, radiative decay or other non-equilibrium events. Taking strongly continuous and trace preserving semigroups into consideration, we show that,
under a special criterion,
the generator of such a group admits a certain generalized standard form, thereby shedding new light on known approaches in this direction. Furthermore, we illustrate our analysis in concrete examples.
\\[20ex]
Keywords: open system, time evolution, Lindblad generator, semigroup
\\[3ex]
PACS numbers: \qquad  03.65.Yz , \quad 05.40.-a , \quad 42.50.Dv ,
\quad 03.65.Fd

\end{abstract}

\footnotetext[1]{\texttt{sabina.alazzawi@univie.ac.at}}
\footnotetext[2]{\texttt{Bernhard.Baumgartner@univie.ac.at}}

\maketitle


\newpage

\section{Introduction}\label{intro}

Time evolutions of open quantum systems are not appearing as unitary maps.
Phenomenological equations describing non-equilibrium processes,
like diffusion, radiative decay, radiative excitation, etc., have no
terms describing memory-effects, they are Markovian.
These equations show the arrow of time, one cannot follow evolutions indefinitely backwards,
so they are to be described with \textbf{semigroups} of maps $\Tt^t$, acting on density matrices.

Basic dynamics is reversible.
In the quest for answering the question of
how irreversibility can emerge one has to perform limiting processes.
The methods of deriving $\Tt^t$ from basic unitary dynamics, see \cite{D74,D76,D76b}, using weak coupling or singular coupling limits,
give rise to semigroups of \textbf{completely positive} maps.

The generally possible structure of generators of such quantum dynamical processes
has been completely characterized for norm-continuous  semigroups,
yielding the ``GKS-Lindblad-equation'', \cite{GKS76,L76}.
Further attempts to study strongly continuous evolutions by E.B.Davies, \cite{D77,D77b,D80},
led, with some special assumptions, to a ``standard form'' of generators.
A.S.Holevo extended this study further, \cite{K95,H96,H97}.
He defines a generator in the Heisenberg picture as a quadratic form on the Hilbert space.
Crucial results of investigations in this sense
are presented by F.Fagnola and R.Rebolledo   in \cite{FR06}, sect. 3.

\par
In this paper we rework their methods. We
define a \textbf{generalized standard form}
for each generator of a QDS (quantum dynamical semigroup which is strongly continuous)
in the Schr\"{o}dinger picture and from a purely functional analytic point of view.
It allows for a complete characterization in case
the QDS is unambiguously defined by its ``matrix normal'' action on the matrix units
in some basis of the Hilbert space.
Two new examples of a QDS are presented, showing all the special features of the present analysis.
One is as simple and clear as possible, the other one models a natural process
observed in modern quantum physics.

\section{Setup and main statements}\label{setup}

We use the known fact that each strongly continuous semigroup on a Banach space
has a generator $\L$,
a closed operator with dense domain of definition $\DD(\L)$ which is
the set of all those Banach space elements $\rho$ for which the limit $t\rightarrow 0$
in the  formula defining $\L$,
\beq\label{generator}
\L(\rho)= \lim_{t\rightarrow 0}\frac{1}{t} \left( \Tt^t(\rho)- \rho \right),
\eeq
exists in norm. (In \cite{D76b} such a semigroup is denoted as  ``$c_0$ semigroup''.)
Now a QDS acts on three levels:  Hilbert space, density matrices, ``super operators'' acting on density matrices.
The goal is to represent the generator $\L$ by way of decomposing its action
into actions described by Hilbert space operators. Under certain assumptions stated below, this can be done by considering
\begin{itemize}
  \item  a separable Hilbert space $\HH$,
  \item  the Banach space of trace class operators $\mathfrak{T}(\HH)$ with the trace norm,
  \item  a strongly continuous semigroup of trace preserving completely positive
  maps $\Tt^t :$ $\mathfrak{T}(\HH)\rightarrow\mathfrak{T}(\HH)$ with generator $\L$.
\end{itemize}

\begin{define}{ \textbf{\textrm{ Generalized standard form}}}\label{gksldh}
\newline
Consider $L_k$ and $M$, linear operators on $\HH$ with dense domains of definition, and with the properties that
$M$ is  closed and generates a semigroup on $\HH$, and each $L_k$ is relatively bounded with respect to $M$.
Consider, moreover, ``generalized operators''
$L_k^\dag$ and $M^\dag$, linear operators mapping $\HH$ into the conjugate algebraic dual of $\DD(M)$,
such that for all $\psi\in\HH$ and for all $\eta\in\DD(M)$,
\beq\label{genop1}                                                                   
\langle \eta|L_k^\dag|\psi\rangle := \langle \psi|L_k|\eta\rangle^\ast,  \quad \quad
\langle \eta|M^\dag|\psi\rangle   := \langle \psi|M|\eta\rangle^\ast.
\eeq
Then the generator $\L$ of a trace preserving QDS on $\mathfrak{T}(\HH)$ is in generalized standard form if
\beq\label{gensform}
\L(\rho) = \sum_k L_k\tdt \rho\tdt L_k^\dag - M\tdt\rho - \rho\tdt M^\dag,
\eeq
\beq
  \sum_k L_k^\dag L_k = M+M^\dag. \label{formequal}
\eeq
Equation (\ref{gensform}) holds for  $\rho$ of finite rank
which can be represented as finite sums of dyadic products
$\sum_n |\phi_n\rangle\langle\psi_n|$ with $\{\phi_n,\psi_n\}\subset\DD(M)$.
The domain $\DD(\L)$ is the closure of this set $\DD_0$
in graph norm $\|\rho\|_\L=\|\rho\|_1 +\|\L(\rho)\|_1$.
Equation (\ref{formequal}) holds for the noted entities as quadratic forms with $\DD(M)$ as the common form domain.
\end{define}
In sloppy words, the operators $M^\dag$ and $L_k^\dag$ have their domain of definition at the left hand side.
In Section \ref{discussion} we discuss these notations and the relation to Davies' investigations.

To prove the existence of such a generalized standard form we
need the assumption that there is some basis,
in which the matrix elements of density operators are differentiable functions of time.
We, therefore, consider:
\begin{itemize}
\item  a basis $\{e_k\}\subset\HH$ and the dense set  of finite linear combinations of these basis elements
  \beq\label{de}
  \DD_e:=\lspan\{e_k\},
  \eeq
  \item  the set of finite linear combinations of matrix units in this basis
  \beq\label{desquared}
  \DD_e^2 := \lspan\{|e_k\rangle\langle e_\ell|\}\subset\mathfrak{T}(\HH).
  \eeq
\end{itemize}
\begin{define}{ \textrm{\textbf{ Matrix normal QDS}}}\label{decomposable}
\newline
A QDS on $\mathfrak{T}(\HH)$ is called
``matrix normal'' if there exists a basis $\{e_k\}$ of $\HH$ such that $\DD_e^2$ as defined in  (\ref{desquared})
is a core for its generator $\L$.
\end{define}
The operator $\L$ is then completely defined by its action on each $\rho$ in the core $\DD_e^2$.
It is hence \emph{decomposable} into its actions on the matrix units $|e_k\rangle\langle e_\ell|$ in the basis $\{e_k\}$.
It is, for our intentions, not sufficient to assume nothing but that $\DD_e^2$ is just contained in the domain of definition of $\L$.
In Section \ref{discussion} we explain this remark and give an example.

This formal assumption on decomposability with respect to matrix elements
is now related to the description of generators $\L$ in the generalized standard form,
using operators on the underlying Hilbert space $\HH$,
without a priori reference to a basis.
The main result of the present investigation is:
\begin{thm}{ \textbf{\textrm{ Matrix normality is equivalent to the existence of a generalized standard form.}}}\label{gksldhthm}
\newline
There exists a generalized standard form for the generator $\L$
of a trace preserving QDS if and only if it is matrix normal.
In this case  $\DD_e\subset\DD(M)$.
\end{thm}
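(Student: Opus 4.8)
The plan is to treat the two implications of the equivalence separately, reading off the inclusion $\DD_e\subset\DD(M)$ from each construction.

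\textbf{Generalized standard form $\Rightarrow$ matrix normality.} Assume $\L$ has the form (\ref{gensform})--(\ref{formequal}). First I would produce a suitable basis: since $M$ is closed, $\DD(M)$ with the graph norm $\|\cdot\|_M$ is a Hilbert space, separable because $\HH$ is, so applying Gram--Schmidt in the inner product of $\HH$ to a $\|\cdot\|_M$-dense sequence in $\DD(M)$ yields an orthonormal basis $\{e_k\}$ of $\HH$ with $\DD_e=\lspan\{e_k\}\subset\DD(M)$, still $\|\cdot\|_M$-dense there. This already gives $\DD_e\subset\DD(M)$. It remains to show $\DD_e^2$ is a core for $\L$. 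Now $\DD_e^2\subset\DD_0\subset\DD(\L)$ and $\DD_0$ is $\|\cdot\|_\L$-dense in $\DD(\L)$ by Definition \ref{gksldh}, so it is enough to approximate each dyad $|\phi\rangle\langle\psi|$, $\phi,\psi\in\DD(M)$, in graph norm by elements of $\DD_e^2$. Choosing $\phi_j,\psi_j\in\DD_e$ with $\phi_j\to\phi$, $\psi_j\to\psi$ in $\|\cdot\|_M$ gives $|\phi_j\rangle\langle\psi_j|\to|\phi\rangle\langle\psi|$ in trace norm; writing (\ref{gensform}) on $\DD_0$ as $\L(|\phi\rangle\langle\psi|)=\sum_k|L_k\phi\rangle\langle L_k\psi|-|M\phi\rangle\langle\psi|-|\phi\rangle\langle M\psi|$ and using the identity $\sum_k\|L_k\xi\|^2=2\,\RE\langle\xi|M|\xi\rangle$ from (\ref{formequal}) together with $M$-boundedness of the $L_k$, one checks $\L(|\phi_j\rangle\langle\psi_j|)\to\L(|\phi\rangle\langle\psi|)$ in trace norm. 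Passing to finite sums, $\DD_e^2$ is $\|\cdot\|_\L$-dense in $\DD_0$, hence a core, so the QDS is matrix normal.

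\textbf{Matrix normality $\Rightarrow$ generalized standard form.} Fix a basis $\{e_k\}$ for which $\DD_e^2$ is a core; then $|e_i\rangle\langle e_j|\in\DD(\L)$, so the matrix $\langle e_a|\L(|e_i\rangle\langle e_j|)|e_b\rangle$ is defined, and by duality these numbers are the $t=0$ derivatives of the matrix elements of the (adjoint) semigroup. Following \cite{FR06}, Sect.~3, and \cite{D77b,D80}, the construction runs: (i) complete positivity of each $\Tt^t$ forces $\L$ to be conditionally completely positive, which makes a certain kernel built from $\langle e_a|\L(|e_i\rangle\langle e_j|)|e_b\rangle$ and a to-be-chosen ``drift matrix'' $(\mu_{ai})$ positive semidefinite; (ii) a Kolmogorov/Stinespring decomposition of this kernel produces operators $L_k$ on $\DD_e$, with $k$ ranging over a countable set since $\HH$ is separable; (iii) define $M$ on $\DD_e$ by fixing its symmetric part via $M+M^\dag=\sum_k L_k^\dag L_k$ (forced by $\Tr\L=0$) and reading its antisymmetric Hamiltonian part off $\L$; (iv) define $L_k^\dag,M^\dag$ by (\ref{genop1}). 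Then (\ref{gensform}) holds on $\DD_0$ and (\ref{formequal}) as a form identity on $\DD(M)$ by construction, and since $\DD_e^2\subset\DD_0\subset\DD(\L)$ with $\DD_e^2$ a core, the graph-norm closure of $\DD_0$ is indeed $\DD(\L)$; the inclusion $\DD_e\subset\DD(M)$ is immediate because $M$ was built on $\DD_e$.

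\textbf{Main obstacle.} The genuinely hard step is (iii)--(iv) above: one must show that the densely defined $M$ is closable, that its closure generates a strongly continuous semigroup on $\HH$, and that each $L_k$ is relatively bounded with respect to $M$. Dissipativity of $-M$ comes for free from $\RE\langle\xi|M|\xi\rangle=\half\sum_k\|L_k\xi\|^2\ge0$, but the range (Hille--Yosida) condition has to be recovered from the fact that $\L$ is already the generator of a QDS on $\mathfrak{T}(\HH)$ --- morally, by relating $(\lda-\L)^{-1}$ to $(\lda+M)^{-1}$ through the perturbation series reconstructing $\Tt^t$ from the free evolution $\rho\mapsto e^{-tM}\rho\,e^{-tM^\dag}$ and the jump part $\rho\mapsto\sum_k L_k\rho L_k^\dag$. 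This is precisely the minimal-semigroup analysis of Davies and of Fagnola--Rebolledo, and it is where the real work lies; by comparison, the bookkeeping of domains and the verification that all identities survive passage to the closures are routine.
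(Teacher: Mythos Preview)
Your treatment of the implication ``generalized standard form $\Rightarrow$ matrix normal'' matches the paper's Section~\ref{standardtoadc} in spirit: construct a basis $\{e_k\}$ giving a core for $M$, then approximate dyads in graph norm using the decomposition $\L=\L_+-\M$ together with the relative $M$-boundedness of the $L_k$. The paper makes the approximation quantitative via a Schwarz-type bound for $\L_+$ (Lemma~\ref{schwarztype}) and an explicit $\eps$-estimate (Lemma~\ref{approx}), but the strategy is the same.

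For ``matrix normal $\Rightarrow$ generalized standard form'' your route diverges from the paper's in a way that creates a real gap. You propose to extract the $L_k$ first from a conditionally CP kernel and then assemble $M$ on $\DD_e$ from its symmetric and antisymmetric parts, finally appealing to a Dyson/minimal-semigroup argument to show that $M$ generates a contraction semigroup. The difficulty is that this last step is circular as stated: the perturbation series that relates $(\lambda-\L)^{-1}$ to $(\lambda+M)^{-1}$ already presupposes that $e^{-tM}$ exists, and the Davies/Fagnola minimal-semigroup machinery runs in the opposite direction (from an $M$ known to generate a semigroup to a QDS), so it does not by itself deliver the range condition for your $M$.

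The paper sidesteps this entirely by \emph{defining $M$ first}, through the explicit formula
\[
M\psi := -\L(|\psi\rangle\langle\chi|)\chi + \tfrac12\langle\chi|\L(|\chi\rangle\langle\chi|)|\chi\rangle\,\psi
\]
on the domain $\DD(M)=\{\psi:|\psi\rangle\langle\chi|\in\DD(\L)\}$ (Definition~\ref{mcontractor}). This ties $\DD(M)$ directly to $\DD(\L)$, so closedness of $M$ is inherited from closedness of $\L$, and the inclusion $\DD_e\subset\DD(M)$ is immediate. One then shows that $\L_+:=\L+M\cdot+\cdot M^\dag$ is completely positive (Proposition~\ref{expander}, via the index-doubling trick (\ref{double}) and Lemma~\ref{lprop}) and only \emph{afterwards} obtains the $L_k$ as generalized Kraus operators of $\L_+$. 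Maximal accretivity of $M$ is proved by contradiction (Proposition~\ref{contractor}): a strict accretive extension $\bar M$ would, via the construction in \cite[Sect.~3]{F99}, produce a contractive semigroup on $\Tf(\HH)$ whose generator strictly extends $\L$, contradicting maximal dissipativity of $\L$. No resolvent comparison or Dyson series is needed. This is the missing idea in your sketch; once you adopt the formula for $M$, the ``main obstacle'' you identify dissolves into two short propositions.
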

%
%


\section{Proofs and further details}\label{procedure}

\subsection{From decomposability to the standard form.}\label{adctostandard}

Before we come to the proof of Theorem \ref{gksldhthm}, some preparation is necessary. We start by stating the following
\begin{lem}\label{lprop}

The generator $\L$ of a matrix normal trace preserving QDS has the properties:

\begin{enumerate}
  \item $\forall\phi\in\DD_e: \quad \langle\phi\,|\L(|\phi\rangle\langle\phi|)\,|\phi\rangle\leq 0$, \label{litem1}
  \item $\forall\phi\in\DD_e,\forall\psi\textrm{ with } \langle\phi|\psi\rangle=0 :\quad
           \langle\psi\,|\L(|\phi\rangle\langle\phi|)\,|\psi\rangle\geq 0$,       \label{litem2}
  \item $\forall\{\phi_k\in\DD_e,\psi_k\} \textrm{ with } \sum_k\langle\phi_k|\psi_k\rangle=0 :\quad
           \sum_{k,\ell}\langle\psi_k|\,\L(|\phi_k\rangle\langle\phi_\ell|)\,|\psi_\ell\rangle\geq 0$. \label{litem3}
\end{enumerate}
\end{lem}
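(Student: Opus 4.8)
The three properties are all "positivity" statements that should follow from complete positivity of the maps $\Tt^t$ together with trace preservation, pushed through the limit defining $\L$. The plan is to work entirely inside the core $\DD_e^2$, on which the limit \eqref{generator} converges in trace norm, and to test against the relevant rank-one or rank-$n$ operators. For item \ref{litem1}, take $\phi\in\DD_e$, set $\rho=|\phi\rangle\langle\phi|$, and note $\langle\phi|\Tt^t(\rho)|\phi\rangle\geq 0$ by positivity of $\Tt^t$, while $\langle\phi|\rho|\phi\rangle=\|\phi\|^4$. But $\Tt^t$ is trace preserving, so $\Tr\Tt^t(\rho)=\|\phi\|^2$, hence $\langle\phi|\Tt^t(\rho)|\phi\rangle\leq\|\phi\|^2\cdot\|\phi\|^2=\|\phi\|^4=\langle\phi|\rho|\phi\rangle$ (using $|\psi\rangle\langle\psi|\leq\|\psi\|^2\cdot\Tr(\cdot)$ applied to the positive operator $\Tt^t(\rho)$, i.e. $\langle\phi|\sigma|\phi\rangle\leq\|\phi\|^2\Tr\sigma$ for $\sigma\geq 0$). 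Therefore $\frac1t\langle\phi|(\Tt^t(\rho)-\rho)|\phi\rangle\leq 0$ for every $t>0$, and letting $t\to 0$ gives $\langle\phi|\L(\rho)|\phi\rangle\leq 0$.

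For item \ref{litem3} (which contains item \ref{litem2} as the special case of a single index) the idea is the same, but now one exploits complete positivity rather than mere positivity, in the guise of the fact that a CP trace-preserving map does not increase "off-diagonal" contributions relative to a fixed vector. Concretely, given $\{\phi_k\in\DD_e,\psi_k\}$ with $\sum_k\langle\phi_k|\psi_k\rangle=0$, form $\rho=\sum_{k,\ell}|\phi_k\rangle\langle\phi_\ell|\in\DD_e^2$ and consider the quantity $Q(t):=\sum_{k,\ell}\langle\psi_k|\Tt^t(|\phi_k\rangle\langle\phi_\ell|)|\psi_\ell\rangle$. The claim is that $Q(t)\geq Q(0)$ for all $t\geq 0$; differentiating at $t=0$ then yields $\sum_{k,\ell}\langle\psi_k|\L(|\phi_k\rangle\langle\phi_\ell|)|\psi_\ell\rangle\geq 0$, and $Q(0)=\sum_{k,\ell}\langle\psi_k|\phi_k\rangle\langle\phi_\ell|\psi_\ell\rangle=|\sum_k\langle\phi_k|\psi_k\rangle|^2=0$. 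To see $Q(t)\geq 0$ one writes $Q(t)$ as $\sum_{k,\ell}$ of $\Tt^t$ applied to matrix elements, and recognizes it via the Choi/Stinespring picture: if $\Tt^t$ has Kraus form $\Tt^t(\cdot)=\sum_j V_j(\cdot)V_j^\ast$ (valid at least on the finite-rank operators in play here), then $Q(t)=\sum_j\sum_{k,\ell}\langle\psi_k|V_j|\phi_k\rangle\langle\phi_\ell|V_j^\ast|\psi_\ell\rangle=\sum_j\big|\sum_k\langle\psi_k|V_j|\phi_k\rangle\big|^2\geq 0$. That $Q(t)\geq Q(0)$ is then not even needed separately for item \ref{litem3}: since $Q(t)\geq 0=Q(0)$, the right derivative at $0$ is automatically $\geq 0$ provided the derivative exists, which it does because $\rho\in\DD_e^2\subset\DD(\L)$ and evaluation against the bounded functional $\sigma\mapsto\sum_{k,\ell}\langle\psi_k|\sigma|\psi_\ell\rangle$ is continuous on $\mathfrak{T}(\HH)$. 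Item \ref{litem2} is the case of one summand; item \ref{litem1} is the case $\psi_1=\phi$, whose orthogonality hypothesis $\langle\phi|\phi\rangle=0$ is vacuous, so one instead uses the positivity-plus-trace argument above — equivalently, item \ref{litem1} is item \ref{litem2}/\ref{litem3} together with the trace-preservation bound.

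The one point requiring care — and the main obstacle — is justifying that $\Tt^t$ admits a Kraus (operator-sum) representation when acting on the finite-rank operators appearing here, since $\HH$ is infinite-dimensional and $\Tt^t$ is a priori only a CP map on $\mathfrak{T}(\HH)$. The clean way around this is to avoid Kraus decompositions entirely and argue directly: complete positivity of $\Tt^t$ means precisely that for any finite families $\{\xi_k\}\subset\mathfrak{T}(\HH)$-rank-one sources and test vectors, the relevant block matrix is positive; more simply, $\Tt^t$ CP implies that the map $\mathrm{id}_n\otimes\Tt^t$ is positive on $M_n\otimes\mathfrak{T}(\HH)$, so applying it to the positive operator $(\sum_k|k\rangle\otimes|\phi_k\rangle)(\sum_\ell\langle\ell|\otimes\langle\phi_\ell|)$ and sandwiching with the vector $\sum_k|k\rangle\otimes|\psi_k\rangle$ gives exactly $Q(t)\geq 0$. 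This is the standard equivalence between complete positivity and positivity of ampliations, and it makes the argument rigorous without any structure theorem. With $Q(t)\geq 0$ established and differentiability at $t=0$ coming from membership in the core, all three items follow; one should simply remark once that $\DD_e^2\subset\DD(\L)$ and that the functionals used are trace-norm continuous, so the limit may be taken inside.
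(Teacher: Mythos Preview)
Your argument is correct and follows essentially the same route as the paper: item~\ref{litem1} via positivity of $\Tt^t(\rho)$ combined with trace preservation (so the diagonal entry at $\phi$ cannot grow), items~\ref{litem2} and~\ref{litem3} via positivity of the ampliation $\mathrm{id}_n\otimes\Tt^t$ applied to the rank-one vector $\sum_k|k\rangle\otimes|\phi_k\rangle$ and tested against $\sum_k|k\rangle\otimes|\psi_k\rangle$, then differentiating at $t=0$ using $\DD_e^2\subset\DD(\L)$ and trace-norm continuity of the test functional. The paper phrases item~\ref{litem3} as ``item~\ref{litem2} applied to the generator of $\Tt^t\otimes\mathrm{Id}$ on $\HH\otimes\ell^2$'', which is exactly your ampliation argument; your version is slightly cleaner in that you work directly with $Q(t)\geq 0=Q(0)$ rather than invoking a generator on the tensor product, and you were right to discard the Kraus-form detour.
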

\begin{proof}
Item (\ref{litem1}) is a consequence of preservation of both positivity and trace by transformation of $|\phi\rangle\langle\phi|$ by $\Tt^t$.
Starting at $t=0$ with $\rho=|\phi\rangle\langle\phi|$,
some new positive diagonal elements  may appear in $\Tt^t(|\phi\rangle\langle\phi|)$.
Since  the sum of all the diagonal elements is constant, the original single element,
which was ``$1$'', when $\|\phi\|=1$, has to diminish.
Item (\ref{litem2}) concerns the newly appearing diagonal elements as $t$ increases,
starting from zero, and follows again from positivity of $\Tt^t$.
Item (\ref{litem3}) follows from complete positivity of $\Tt^t$. It is item (\ref{litem2})
      applied to the generator  $\hat{\L}$ of the semigroup  $\hat{\Tt^t} =  \Tt^t \otimes \textrm{Id}$,
      acting on density matrices on $\HH\bigotimes\ell^2$,
      where $\textrm{Id}$ is the identity-map,
      with $\hat{\phi}=\sum_k \phi_k\otimes f_k$,  $\hat{\psi}=\sum_k \psi_k\otimes f_k$,
      where $\{f_k\}$ is an orthonormal basis of $\ell^2$.
\end{proof}
The subsequent construction follows Holevo \cite{H97};
(see also, for the case of norm continuous semigroups, \cite{F99};
the method, according to F.Fagnola, has already been used earlier by A.Frigerio):

\begin{define}{ \textbf{The generator of contraction in Hilbert space, M}}\label{mcontractor}
\newline
For a matrix normal QDS
choose some vector $\chi\in\DD_e$ with $\|\chi\|=1$.
On $\DD(M):=\{\psi\,:\,|\psi\rangle\langle\chi|\in \DD(\L) \}$ an operator $M$
is defined by
\beq\label{em}
M\psi := -\L(|\psi\rangle\langle\chi|)\chi +
\half \langle\chi|\,\L(|\chi\rangle\langle\chi|)\,|\chi\rangle\, \psi.
\eeq
\end{define}
The naming is explained in Proposition \ref{contractor}.
Note that $\DD_e\subset \DD(M)$, which will be used in Equation (\ref{expandermap}).

\begin{proposition}{ \textbf{The expander in the generator of the QDS}}\label{expander}
\newline
For a matrix normal QDS
the map
\beq\label{expandermap}
\L_+ :\,\,\rho\mapsto \L(\rho)+M\rho+\rho M^\dag
\eeq
is a completely positive map from $\DD_e^2$ into $\mathfrak{T}(\HH)$.
\end{proposition}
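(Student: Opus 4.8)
My plan is to reduce the claim to the standard sesquilinear criterion for complete positivity, and then to exploit that the definition~(\ref{em}) of $M$ is engineered precisely so that Lemma~\ref{lprop}(\ref{litem3}) applies after adjoining one extra vector to the test family. First I would record the criterion: a linear map $\Phi$ on $\DD_e^2$ is a completely positive map into $\mathfrak{T}(\HH)$ precisely when
\[
 \sum_{k,\ell}\langle\psi_k\,|\,\Phi(|\phi_k\rangle\langle\phi_\ell|)\,|\psi_\ell\rangle\ \ge\ 0
\]
for every pair of finite families $\{\phi_k\}\subset\DD_e$, $\{\psi_k\}\subset\HH$. I only need the direction that this sesquilinear positivity forces complete positivity, which holds because a positive $n\times n$ matrix with entries in $\DD_e^2$ is supported on $V^{\oplus n}$, where $V$ is the finite-dimensional span of the basis vectors occurring in those entries, hence is a finite sum of rank-one blocks $[\,|x_i\rangle\langle x_j|\,]_{i,j}$ with all $x_i\in V\subset\DD_e$; applying $\Phi\otimes\mathrm{Id}_n$ and testing against a vector $(\psi_1,\dots,\psi_n)\in\HH^{\oplus n}$ reproduces exactly the displayed sums. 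So it suffices to verify the displayed inequality for $\Phi=\L_+$. (This also makes the target space explicit: since $M|\phi_k\rangle\langle\phi_\ell|=|M\phi_k\rangle\langle\phi_\ell|$ and, by (\ref{genop1}), $|\phi_k\rangle\langle\phi_\ell|\,M^\dag=|\phi_k\rangle\langle M\phi_\ell|$, with $\DD_e\subset\DD(M)$, each $\L_+(|\phi_k\rangle\langle\phi_\ell|)$ is trace class, so $\L_+$ genuinely maps $\DD_e^2$ into $\mathfrak{T}(\HH)$.)

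Next I would note that, being positive, every $\Tt^t$ preserves Hermiticity, and hence so does $\L$: with $\rho^\dag$ denoting the Hilbert-space adjoint, $\L(\rho^\dag)=\L(\rho)^\dag$ for $\rho\in\DD(\L)$. In particular $c:=\langle\chi|\,\L(|\chi\rangle\langle\chi|)\,|\chi\rangle$ is real (and $\le 0$ by Lemma~\ref{lprop}(\ref{litem1})), and $\L(|\chi\rangle\langle\phi_\ell|)=\L(|\phi_\ell\rangle\langle\chi|)^\dag$. Now I fix finite families $\{\phi_k\}\subset\DD_e$, $\{\psi_k\}\subset\HH$, set $s:=\sum_k\langle\phi_k|\psi_k\rangle$, and enlarge both families by the single extra pair $\phi_0:=\chi$, $\psi_0:=-s\,\chi$. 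Then $\sum_{k\ge 0}\langle\phi_k|\psi_k\rangle=0$, so Lemma~\ref{lprop}(\ref{litem3}) applied to the enlarged families yields
\[
 \sum_{k,\ell\ge 0}\langle\psi_k|\,\L(|\phi_k\rangle\langle\phi_\ell|)\,|\psi_\ell\rangle\ \ge\ 0 .
\]

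Finally I would expand the left-hand side and identify it. Splitting it into the block $k,\ell\ge 1$, the two mixed strips ($k=0$, $\ell\ge1$ and $k\ge1$, $\ell=0$) and the corner $k=\ell=0$: in the strips one inserts $\psi_0=-s\chi$ and rewrites $\L(|\phi_\ell\rangle\langle\chi|)\chi=\frac{1}{2}c\,\phi_\ell-M\phi_\ell$ from (\ref{em}) (for the $k=0$ strip also using $\L(|\chi\rangle\langle\phi_\ell|)=\L(|\phi_\ell\rangle\langle\chi|)^\dag$ and $c\in\R$), so that the two strips together contribute $s\sum_{k\ge1}\langle\psi_k|M\phi_k\rangle+\overline{s}\sum_{\ell\ge1}\langle M\phi_\ell|\psi_\ell\rangle-c|s|^2$, while the corner contributes $+c|s|^2$; the two $c|s|^2$ terms cancel. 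Using $\langle M\phi_\ell|\psi_\ell\rangle=\langle\psi_\ell|M|\phi_\ell\rangle^\ast=\langle\phi_\ell|M^\dag|\psi_\ell\rangle$ from (\ref{genop1}), what survives is exactly $\sum_{k,\ell\ge1}\langle\psi_k|\,\L_+(|\phi_k\rangle\langle\phi_\ell|)\,|\psi_\ell\rangle$, which is therefore $\ge 0$ — the inequality demanded in the first step. The only genuine obstacle here is the bookkeeping of the last two steps: choosing the extra pair $(\chi,-s\chi)$ so the inner-product constraint of Lemma~\ref{lprop}(\ref{litem3}) is met, keeping straight which scalar products get conjugated (recall $M^\dag$ is defined ``on the left'', so its matrix elements are the complex conjugates of those of $M$), and verifying that the leftover corner term $c|s|^2$ precisely annihilates the $-c|s|^2$ from the strips. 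That cancellation is exactly the purpose of the additive $\frac{1}{2}c$ in the definition~(\ref{em}) of $M$; the rest is the routine reduction to the sesquilinear criterion together with the standard fact that positive maps preserve Hermiticity.
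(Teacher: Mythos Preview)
Your argument is correct and follows essentially the same route as the paper's proof: reduce to the sesquilinear criterion, enlarge the test family by vectors proportional to $\chi$ so that the constraint $\sum_k\langle\phi_k|\psi_k\rangle=0$ of Lemma~\ref{lprop}(\ref{litem3}) is met, and identify the resulting $\L$-sum with the $\L_+$-sum. The only difference is cosmetic: you adjoin a single pair $(\chi,-s\chi)$ with $s=\sum_k\langle\phi_k|\psi_k\rangle$, whereas the paper adjoins $N$ pairs $\phi_{N+k}=-\langle\psi_k|\phi_k\rangle\chi$, $\psi_{N+k}=\chi$ and recognizes the four blocks of the resulting $2N\times 2N$ sum directly as the four terms of (\ref{hol}); your one-vector enlargement is a mild streamlining of the same idea.
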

\begin{proof}
Each $\sigma\in\DD_e^2$  can be expanded as a linear combination of at most four positive matrices of finite rank.
Each positive matrix $\rho\in\DD_e^2$  can be represented as $\rho=\sum_{n=1}^N  |e_n\rangle\langle e_n|$
for some elements $e_n\in\DD_e$, and there exists
$$
\Tr\{M\rho+\rho M^\dag\} = \sum_{n=1}^N \langle e_n|(M+M^\dag)|e_n\rangle <\infty.
$$
Now $\L$ generates a semigroup which preserves the trace, so for all $\rho\in\DD(\L)$
\beq\label{ltrace}
\Tr\{\L(\rho)\}=\left[\frac d {dt}\Tr\{\Tt^t(\rho)\}\right]_{t=0}=0,
\eeq
and this implies, for all $\rho\in\DD_e^2$
\beq
\Tr\{\L_+(\rho)\}=\Tr\{M\rho + \rho M^\dag\},  \label{lplusone}
\eeq
and $\L_+$ is well defined on $\DD_e^2$.


The criterion for complete positivity states that
$\sum_{k,\ell} \langle\psi_k|\,\L_+(|\phi_k\rangle\langle\phi_\ell|)\,|\psi_\ell\rangle$
is non-negative for all
$
N$ and $ \{(\phi_k,\psi_k), k=\,1 \ldots N,\,\,\phi_k\in\DD_e\}.
$
To check whether this criterion is fulfilled,
we insert (\ref{em}) into (\ref{expandermap}). 
We get
\beqa
\langle\psi_k|\,\L_+(|\phi_k\rangle\langle\phi_\ell|)\,|\psi_\ell\rangle = & & \nonumber\\
 \langle\psi_k|\,\L(|\phi_k\rangle\langle\phi_\ell|)\,|\psi_\ell\rangle
&+& \langle\chi|\,\L(|\chi\rangle\langle\chi|)\,|\chi\rangle \cdot \langle\psi_k|\phi_k\rangle\cdot\langle\phi_\ell|\psi_\ell\rangle\nonumber\\
- \langle\psi_k|\,\L(|\phi_k\rangle\langle\chi|)\,|\chi\rangle \cdot\langle\phi_\ell|\psi_\ell\rangle
&-& \langle\chi|\,\L(|\chi\rangle\langle\phi_\ell|)\,|\psi_\ell\rangle \cdot \langle\psi_k|\phi_k\rangle. \label{hol}
\eeqa
Defining
\beq\label{double}
\phi_{N+k}:=- \langle\psi_k|\phi_k\rangle\cdot \chi,\,\qquad \psi_{N+k}:=\chi,
\eeq
we continue the reformulation of (\ref{hol}), writing it as
\beq
\sum_{i=0}^1 \,\sum_{j=0}^1\,\, \langle\psi_{k+iN}|\,\L(|\phi_{k+iN}\rangle\langle\phi_{\ell+jN}|)\,|\psi_{\ell+jN}\rangle .
\eeq
Summation over all indices gives
\beq
\sum_{k=1}^N \sum_{\ell=1}^N    \langle\psi_k|\,\L_+(|\phi_k\rangle\langle\phi_\ell|)\,|\psi_\ell\rangle
= \sum_{k=1}^{2N} \sum_{\ell=1}^{2N}    \langle\psi_k|\,\L(|\phi_k\rangle\langle\phi_\ell|)\,|\psi_\ell\rangle \geq 0.
\eeq
The inequality at the end holds because of  the property stated in Lemma (\ref{lprop}.\ref{litem3}) of $\L$
and $\sum_{k=1}^{2N}\langle\phi_k|\psi_k\rangle = 0$, which follows from (\ref{double}).
Complete positivity of $\L_+$ is thus established.
\end{proof}

With these preparations we now come to the proof of the algebraic part of Theorem \ref{gksldhthm}:
\begin{proof}{ \textbf{Algebraic skeleton of the extended standard form:}}
\newline
Due to Proposition \ref{expander}, Theorem \ref{krausthm} and Corollary \ref{krausdomain},
given in the Appendix in Section \ref{unboundedcp},
we have for all $ \rho \in \DD_e^2$ that $
\L(\rho)$ can be decomposed as
$$
\L(\rho)=\L_+(\rho)-M\rho-\rho M^\dag.
$$
Moreover, there exist generalized Kraus Operators $L_k$, acting on $\DD_e$, such that
$$
\L_+(\rho)=\sum_k L_k\tdt\rho\tdt L_k^\dag.
$$

Proof of Equation (\ref{formequal}):
\newline
Operations which are standard procedures in dealing with ordinary operators
have to be carefully adapted to the case where generalized operators appear.
Consider $\rho=|\phi\rangle\langle\phi| \in \DD_e^2$ with $\|\phi\|=1$.
So $\L(\rho)\in\mathfrak{T}(\HH)$ and $\Tt^t(\rho)$ is differentiable:
$$
0= \frac d{dt} \Tr\left(\Tt^t(\rho)\right)|_{t=0} = \Tr\left(\L(\rho)\right)=
\sum_\alpha\langle\alpha|\left(\sum_k L_k\rho L_k^\dag -M\rho-\rho M^\dag\right)|\alpha\rangle=
$$
$$
=\sum_\alpha \left( \sum_k\langle\alpha|L_k|\phi\rangle\langle\phi|L_k^\dag|\alpha\rangle
-\langle\alpha|M|\phi\rangle\langle\phi|\alpha\rangle- \langle\alpha|\phi\rangle\langle\phi|M^\dag|\alpha\rangle\right),
$$
for any basis $\alpha$. We choose a basis with $|\alpha=1\rangle=|\phi\rangle$,
so $\langle\alpha|\phi\rangle=\delta_{\alpha,1}$, and
$$
\Tr\left(\L(\rho)\right)=\sum_\alpha\left( \sum_k |\langle\alpha|L_k|\phi\rangle|^2
-(\langle\alpha|M|\phi\rangle\ + \langle\phi|M^\dag|\alpha\rangle)\delta_{\alpha,1} \right)=
$$
$$
=\sum_\alpha \left(\sum_k |\langle\alpha|L_k|\phi\rangle|^2 \right)  -\langle\phi|(M + M^\dag)|\phi\rangle.
$$
Since $\Tr\left(\L(\rho)\right)=0$ and the expectation value of $M$ is finite, the first term with the double sum is also finite. This sum over non-negative numbers, however, can be rearranged as follows:
$$
\sum_\alpha \sum_k |\langle\alpha|L_k|\phi\rangle|^2 =
\sum_k \sum_\alpha \langle\phi| L_k^\dag  |\alpha\rangle \langle\alpha|  L_k|\phi\rangle =
 \langle\phi|\sum_k L_k^\dag L_k|\phi\rangle.
$$
Therefore, the formula
$$
\langle\phi|\left(\sum_k L_k^\dag L_k -M-M^\dag  \right)|\phi\rangle=0
$$
holds for each vector $\phi\in\DD_e$, and Equation (\ref{formequal}) is proven
to be valid on the restricted form domain $\DD_e$.
\end{proof}

Having built the algebraic structure, we consider analytic-topological features.
From Equation (\ref{formequal})
on its skeleton domain it follows immediately that each $L_k$ is bounded relative to $M$, that is
$$
\|L_k\phi\|^2=2 \textrm{Re} \langle\phi|M|\phi\rangle \leq 2\|\phi\|\cdot\|M\phi\| < (\|\phi\|+\|M\phi\|)^2.
$$
The $L_k$ can therefore be uniquely extended to the entire domain $\DD(M)$. The super operator $\L_+$ and the validity of Equation (\ref{gensform}) can be extended to $\DD_0$,
and  Equation (\ref{formequal}) holds on the whole form domain.

In order to finish the proof of one half of Theorem \ref{gksldhthm}, our last task is to show
that $M$ is closed and generates a contraction semigroup.
\begin{proposition}{ \textbf{ M is closed.}}
\newline
The operator $M$, the mover-and-contractor presented in Definition \ref{mcontractor}, is closed.
\end{proposition}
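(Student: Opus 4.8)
The plan is to show that $M$ is closed by exploiting the closedness of the generator $\L$ together with the explicit formula (\ref{em}) defining $M$ on $\DD(M)$. Recall that $\DD(M)=\{\psi : |\psi\rangle\langle\chi| \in \DD(\L)\}$ and
$$
M\psi = -\L(|\psi\rangle\langle\chi|)\chi + \half\langle\chi|\L(|\chi\rangle\langle\chi|)|\chi\rangle\,\psi.
$$
So suppose $\psi_n \in \DD(M)$ with $\psi_n \to \psi$ and $M\psi_n \to \xi$ in $\HH$. I want to conclude $\psi \in \DD(M)$ and $M\psi = \xi$. The second term on the right of (\ref{em}) is a bounded multiple of $\psi_n$, hence converges to $\half\langle\chi|\L(|\chi\rangle\langle\chi|)|\chi\rangle\,\psi$; therefore $\L(|\psi_n\rangle\langle\chi|)\chi$ converges in $\HH$ as well, to some vector which I will call $-\xi + \half\langle\chi|\L(|\chi\rangle\langle\chi|)|\chi\rangle\psi$.

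The key step is to upgrade this convergence of $\L(|\psi_n\rangle\langle\chi|)\chi$ (a single vector) to convergence of the full trace-class operators $\L(|\psi_n\rangle\langle\chi|)$ in trace norm. First, $|\psi_n\rangle\langle\chi| \to |\psi\rangle\langle\chi|$ in trace norm, since $\||\psi_n\rangle\langle\chi| - |\psi\rangle\langle\chi|\|_1 = \|\psi_n - \psi\|\cdot\|\chi\|$. Now I observe that $\L(|\psi_n\rangle\langle\chi|)$ must be an operator of rank at most one whose range lies in a fixed direction: indeed, applying the semigroup $\Tt^t$ to $|\psi_n\rangle\langle\chi|$ and using that $\Tt^t$ maps operators with ``row index $\chi$'' structure appropriately — more directly, $\Tt^t(|\psi_n\rangle\langle\chi|)$ has the form (image of $\psi_n$) $\otimes$ (fixed functional built from $\chi$) up to error terms, so the difference quotient, hence $\L(|\psi_n\rangle\langle\chi|)$, is of the form $|v_n\rangle\langle\chi|$ plus a term of the form $|w\rangle\langle\cdot|$ controlled by $\psi_n$. (This structural fact is essentially the content of the computation that produced (\ref{em}) in the first place.) Then trace-norm convergence of $\L(|\psi_n\rangle\langle\chi|)$ follows from norm convergence of the single relevant vector together with $\psi_n \to \psi$.

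Once $|\psi_n\rangle\langle\chi| \to |\psi\rangle\langle\chi|$ and $\L(|\psi_n\rangle\langle\chi|)$ converges in $\TT(\HH)$, closedness of $\L$ gives $|\psi\rangle\langle\chi| \in \DD(\L)$ with $\L(|\psi_n\rangle\langle\chi|) \to \L(|\psi\rangle\langle\chi|)$. Hence $\psi \in \DD(M)$ by definition, and evaluating both sides on $\chi$ and substituting into (\ref{em}) yields $M\psi = \xi$. The main obstacle is precisely the rank-one structural argument in the middle paragraph: I must verify that $\L$ does not move the ``$\langle\chi|$'' factor out of a controllable subspace, so that vector convergence of $\L(|\psi_n\rangle\langle\chi|)\chi$ really does force trace-norm convergence of $\L(|\psi_n\rangle\langle\chi|)$; this is where the trace-preservation identity (\ref{ltrace}) and the already-established generalized-standard-form expression $\L(\rho) = \sum_k L_k\,\rho\,L_k^\dag - M\rho - \rho M^\dag$ for $\rho \in \DD_e^2$, extended to $\DD_0$, will have to be invoked to pin down the form of $\L(|\psi_n\rangle\langle\chi|)$ explicitly.
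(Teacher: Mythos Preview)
Your overall architecture is right --- reduce to showing that $\L(|\psi_n\rangle\langle\chi|)$ converges in trace norm and then invoke closedness of $\L$ --- but the structural claim you rely on in the middle paragraph is false, and this is a genuine gap rather than a detail. The operator $\L(|\psi_n\rangle\langle\chi|)$ is \emph{not} of rank at most one, nor does $\Tt^t$ preserve any ``row index $\chi$'' structure. Using the generalized standard form already established on $\DD_0$ one has
\[
\L(|\psi_n\rangle\langle\chi|)=\sum_k |L_k\psi_n\rangle\langle L_k\chi|\;-\;|M\psi_n\rangle\langle\chi|\;-\;|\psi_n\rangle\langle M\chi|,
\]
and the $\L_+$-part $\sum_k |L_k\psi_n\rangle\langle L_k\chi|$ can have arbitrarily high rank. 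Hence convergence of the single vector $\L(|\psi_n\rangle\langle\chi|)\chi$ in $\HH$ does not by itself force trace-norm convergence of $\L(|\psi_n\rangle\langle\chi|)$; the bra-side vectors $\langle L_k\chi|$ are many and are not all proportional to $\langle\chi|$.

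What actually closes the gap is not a rank argument but quantitative control of the $\L_+$-piece coming from its \emph{positivity} together with the trace identity~(\ref{lplusone}). For $\psi_{m,n}:=\psi_m-\psi_n$ one has
\[
0\le \Tr\,\L_+(|\psi_{m,n}\rangle\langle\psi_{m,n}|)=\Tr\{M|\psi_{m,n}\rangle\langle\psi_{m,n}|+|\psi_{m,n}\rangle\langle\psi_{m,n}|M^\dag\}\le 2\|\psi_{m,n}\|\,\|M\psi_{m,n}\|\to 0,
\]
and then expanding $0\le \L_+(|\gamma+z\psi_{m,n}\rangle\langle\gamma+z\psi_{m,n}|)$ for $\gamma\in\DD_e$ and arbitrary $z\in\C$ forces the cross term $\L_+(|\psi_{m,n}\rangle\langle\gamma|)\to 0$ in trace norm. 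Equivalently, once the Kraus form is in hand, one can argue directly by Cauchy--Schwarz:
\[
\Big\|\sum_k|L_k\psi_{m,n}\rangle\langle L_k\chi|\Big\|_1\le\Big(\sum_k\|L_k\psi_{m,n}\|^2\Big)^{1/2}\Big(\sum_k\|L_k\chi\|^2\Big)^{1/2},
\]
with $\sum_k\|L_k\psi_{m,n}\|^2=2\,\textrm{Re}\langle\psi_{m,n}|M\psi_{m,n}\rangle\to 0$ by~(\ref{formequal}). Either route gives $\L_+(|\psi_n\rangle\langle\chi|)$ Cauchy; since $\M(|\psi_n\rangle\langle\chi|)=|M\psi_n\rangle\langle\chi|+|\psi_n\rangle\langle M\chi|$ converges trivially, so does $\L(|\psi_n\rangle\langle\chi|)$, and your final step via closedness of $\L$ then goes through.
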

\begin{proof}

We consider a sequence $\psi_n \in \DD(M)$, with $\psi_n\rightarrow\psi$ and $M\psi_n\rightarrow\eta$.
With $\psi_{m,n}:=\psi_m-\psi_n$,
using   Equation (\ref{lplusone}), we have
$$
0\leq\Tr\{\L_+(|\psi_{m,n}\rangle\langle \psi_{m,n}|)\}=
\Tr\{M|\psi_{m,n}\rangle\langle\psi_{m,n}|  + |\psi_{m,n}\rangle\langle\psi_{m,n}|M^\dag\}$$
$$\leq 2\cdot\|\psi_m-\psi_n\|\cdot\|M(\psi_m-\psi_n)\| \rightarrow 0
$$
as $m,n\,\rightarrow\infty$.
All following limits of operators are now existing in trace norm.

For any $\gamma\in\DD_e$, any $z\in\C$, we have
$$
0\leq\L_+(|\gamma+z \psi_{m,n}\rangle\langle\gamma+z \psi_{m,n}|)=$$
$$=\L_+(|\gamma\rangle\langle\gamma|)+z\L_+(|\psi_{m,n}\rangle\langle\gamma|)+
z^\ast\L_+(|\gamma\rangle\langle\psi_{m,n}|)+|z|^2\L_+(|\psi_{m,n}\rangle\langle\psi_{m,n}|).
$$

The first term is constant and the last one vanishes in the limit $m,n\rightarrow\infty$ for arbitrary $|z|$.
The two terms in between, depending linearly on $z$ and being negative for some argument of $z$, have to remain bounded
in relation to $\L_+(|\gamma\rangle\langle\gamma|)$, independently of $z$, hence each one
has to go to zero, especially $\L_+(|\psi_{m,n}\rangle\langle\gamma|)\rightarrow 0$.

It follows that
$\L_+(|\psi_n\rangle\langle\gamma|)$ is a Cauchy sequence, and therefore also
$\L(|\psi_n\rangle\langle\gamma|)$ converges. Since $\L$ is closed,  $|\psi\rangle\langle\gamma|\in\DD(\L)$
for each $\gamma\in\DD_e$
and hence $\psi\in\DD(M)$ and $M\psi=\eta$, by definition.
\end{proof}

\begin{proposition}{ \textbf{ M generates a semigroup on the Hilbert space.}}\label{contractor}
\newline
The operator $M$ is maximal accretive.
It generates the contractive semigroup $e^{-tM}$ on the Hilbert space $\HH$.
\end{proposition}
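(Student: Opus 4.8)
The plan is to deduce the statement from the Lumer--Phillips theorem (equivalently, the Hille--Yosida theorem for contraction semigroups) applied to $-M$. Three of its hypotheses are at hand or immediate: $M$ is densely defined, since $\DD_e\subset\DD(M)$; it is closed by the preceding proposition; and it is accretive because, by the form equality (\ref{formequal}) on its form domain $\DD(M)$,
$$
2\,\RE\langle\psi|M|\psi\rangle=\langle\psi|(M+M^\dag)|\psi\rangle=\sum_k\|L_k\psi\|^2\ \geq\ 0\qquad(\psi\in\DD(M)).
$$
In particular $\|(\lambda+M)\psi\|\geq\lambda\|\psi\|$ for every $\lambda>0$, so $\lambda+M$ is injective with closed range. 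Hence $M$ is maximal accretive, and $-M$ generates the strongly continuous contraction semigroup $e^{-tM}$, as soon as we produce one $\lambda>0$ for which $\range(\lambda+M)$ is dense in $\HH$; it is convenient to take $\lambda$ so large that $\lambda+c'>0$, where $c':=\half\langle\chi|\L(|\chi\rangle\langle\chi|)|\chi\rangle\leq 0$ (Lemma (\ref{lprop}.\ref{litem1})).

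To get the density I would argue by duality. If $\eta\in\HH$ is orthogonal to $\range(\lambda+M)$ then $\eta\in\DD(M^\ast)$ and $M^\ast\eta=-\lambda\eta$, so that $\langle\eta|M|\phi\rangle=-\lambda\langle\eta|\phi\rangle$ for all $\phi\in\DD_e$. Feeding this into the complete-positivity inequality verified in the proof of Proposition \ref{expander}, taken for the single dyad $|\phi\rangle\langle\phi|$ (with $\phi\in\DD_e$) and the vector $\eta$, and recalling $\L_+(|\phi\rangle\langle\phi|)=\L(|\phi\rangle\langle\phi|)+M|\phi\rangle\langle\phi|+|\phi\rangle\langle\phi|M^\dag\geq 0$, yields
$$
\langle\eta|\L(|\phi\rangle\langle\phi|)|\eta\rangle\ \geq\ -2\,\RE\!\big(\langle\eta|M|\phi\rangle\langle\phi|\eta\rangle\big)\ =\ 2\lambda\,|\langle\eta|\phi\rangle|^2\qquad(\phi\in\DD_e);
$$
in other words the functional $\rho\mapsto\langle\eta|\big(\L(\rho)-2\lambda\rho\big)|\eta\rangle$, which is continuous on $\DD(\L)$ in the graph norm (since it is bounded by $\|\eta\|^2\|\rho\|_\L$), is non-negative on every positive $\rho\in\DD_e^2$. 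I would then pick $\phi_0\in\DD_e$ with $\langle\eta|\phi_0\rangle\neq 0$ (possible unless $\eta=0$) and study $f(t):=\langle\eta|\Tt^t(|\phi_0\rangle\langle\phi_0|)|\eta\rangle$, which satisfies $0\leq f(t)\leq\|\phi_0\|^2\|\eta\|^2$ and $f'(t)=\langle\eta|\L\big(\Tt^t(|\phi_0\rangle\langle\phi_0|)\big)|\eta\rangle$. If the inequality above can be propagated along the orbit, i.e.\ $f'(t)\geq 2\lambda f(t)$ for all $t\geq 0$, then by Gr\"onwall $f(t)\geq f(0)\,e^{2\lambda t}\to\infty$, contradicting the bound on $f$; hence $\eta=0$, $\range(\lambda+M)$ is dense, and the proof is complete.

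The genuine obstacle is precisely this last propagation, because $\Tt^t(|\phi_0\rangle\langle\phi_0|)$ leaves the set $\DD_e^2$ on which Proposition \ref{expander} and Lemma \ref{lprop} are available, so the static inequality cannot be invoked directly along the orbit. What has to be shown is that the non-negative, graph-norm continuous functional $\rho\mapsto\langle\eta|(\L(\rho)-2\lambda\rho)|\eta\rangle$ stays non-negative on the entire positive cone of $\DD(\L)$ — equivalently, that the positive elements of the core $\DD_e^2$ are graph-norm dense among the positive elements of $\DD(\L)$; this is the point at which matrix normality (the core property) must be combined with the fact that $\Tt^t$, and hence the resolvent $(\lambda-\L)^{-1}$, preserves positivity. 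An alternative route, avoiding the adjoint $M^\ast$ altogether, is to build $(\lambda+M)^{-1}$ on $\HH$ directly from $(\lambda-\L)^{-1}$ restricted to the ``$\chi$-column'' $\{|\psi\rangle\langle\chi|:\psi\in\HH\}$ followed by the map $\rho\mapsto\rho\chi$; there the difficulty to be surmounted is that this column subspace is not left invariant by $\Tt^t$.
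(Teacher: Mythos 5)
Your accretivity and closedness steps are fine (and coincide with what the paper uses), but the substance of the proposition is the \emph{maximality}, i.e.\ the range condition for $\lambda+M$, and there your argument is incomplete by your own admission. The inequality $\langle\eta|\L(\rho)|\eta\rangle\geq 2\lambda\,\langle\eta|\rho|\eta\rangle$ is only available for positive $\rho$ of finite rank built from vectors in $\DD_e$, because Lemma \ref{lprop} and Proposition \ref{expander} live on $\DD_e^2$; the Gr\"onwall step needs it at $\rho=\Tt^t(|\phi_0\rangle\langle\phi_0|)$, which in general leaves $\DD_e^2$ immediately. Matrix normality gives graph-norm approximation of arbitrary elements of $\DD(\L)$ by elements of $\DD_e^2$, but not by \emph{positive} elements, and positivity is exactly what your functional $\rho\mapsto\langle\eta|(\L(\rho)-2\lambda\rho)|\eta\rangle$ requires; you offer no argument (and none is obvious) for a positivity-preserving graph-norm approximation, nor do you carry out the alternative ``$\chi$-column'' construction. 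So the decisive step is conjectured rather than proved, and the proposal does not establish the proposition. (Minor point: the requirement $\lambda+c'>0$ with $c'=\half\langle\chi|\L(|\chi\rangle\langle\chi|)|\chi\rangle$ plays no role in Lumer--Phillips; any $\lambda>0$ would do if the density could be shown.)

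For comparison, the paper avoids the range condition entirely and argues by contradiction at the level of the semigroup: if $M$ were not maximal accretive, take a maximal accretive extension $\bar M$, extend each $L_k$ by zero on the graph-orthogonal complement of $\DD(M)$ in $\DD(\bar M)$, so that $\sum_k L_k^\dag L_k\leq\bar M+\bar M^\dag$ still holds, and invoke the construction of \cite[Section 3]{F99} to obtain a contraction semigroup on $\Tf(\HH)$ whose generator $\bar{\LL}$ acts in standard form on dyads from $\DD(\bar M)$. Such an $\bar{\LL}$ would properly extend $\L$, contradicting the maximal dissipativity of generators of contraction semigroups. If you wish to salvage your route, the missing ingredient you would have to supply is precisely the positive-cone core statement (or an orbit-propagation argument replacing it); otherwise the extension-plus-minimal-semigroup argument is the workable path.
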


\begin{proof}
$M$ is accretive ($-M$ is dissipative), i.e. $\textrm{Re}\langle\phi|M|\phi\rangle\geq 0$
$\forall\phi\in\DD(M)$, as is seen in Equation \ref{formequal}.
It generates a semigroup iff it is maximal accretive.\par
Suppose now that it is not maximal and that there exists an accretive extension $\bar{M}$,
being maximal and hence generating a semigroup.
In its domain $\DD(\bar{M})$, considered as a Hilbert space with norm (squared) $\|\phi\|_{\bar{M}}^2=\|\bar{M}\phi\|^2+\|\phi\|^2$,
there exists a non-empty subspace $\bar{\DD}$ orthogonal to $\DD(M)$.
Extend further the operators $L_k$ from acting on $\DD(M)$ to acting on  $\DD(\bar{M})$ by $L_k\phi=0$ for all $\phi\in\bar{\DD}$.
Then the inequality $\sum_k L_k^\dag L_k \leq\bar{M}+\bar{M}^\dag$ holds.
Therefore, under these conditions there exists a contractive semigroup on $\mathfrak{T}(\HH)$
with generator $\bar{\LL}$ such that
$$
 \bar{\LL}(|\phi\rangle\langle\psi|)=
\sum_k |L_k\phi\rangle\langle L_k\psi| -|\bar{M}\phi\rangle\langle\psi| - |\phi\rangle\langle \bar{M}\psi|,\qquad\forall (\phi,\psi)\subset\DD(\bar{M}),
$$
as proven in \cite[Section 3]{F99}. However, such an $\bar{\LL}$ would be an extension of $\LL$,
a contradiction to the maximal dissipativity of generators of semigroups.
\end{proof}

The properties of $M$ and of the $L_k$ are in precise correlation
to the structure of generators of QMS as defined in \cite{F99} and \cite{FR06}.


\subsection{Deriving decomposability from the standard form}\label{standardtoadc}

We  assume the existence of a generator $\L$ in generalized standard form
and we divide it, as above, into two pieces.
One of it is the completely positive map $\L_+$, the other one shall now be investigated in more detail.
\begin{define}\label{msuperdef}
On the Banach space $\Tf(\HH)$ we consider the operator
              \beq\label{msuper}
              \M:\,\,\rho\mapsto M\rho + \rho M^\dag,
              \eeq
with domain $\DD_0$, as defined in Definition \ref{gksldh}
as a subset of $\DD(\L)$, namely
\beq\label{dmdefine}
\DD_0 := \{\rho = \sum_{n=1}^N   |\phi_n \rangle\langle\psi_n|,\, \{\phi_n ,\psi_n\}\in\DD(M),\,N<\infty\}\label{rhodec}.
\eeq
\end{define}
It follows that also $\L_+=\L+\M$ is well defined on $\DD_0$.
Note that the expander $\L_+$ is dominated by $\M$. Namely, as in the proof of Proposition \ref{expander}, Equation (\ref{ltrace}), we have the general condition that
$\L$ generates a semigroup which preserves the trace, hence for all $\rho\in\DD(\L)$
\beq
\Tr\{\L(\rho)\}=\left[\frac d {dt}\Tr\{\Tt^t(\rho)\}\right]_{t=0}=0,
\eeq
implying
$$
0\leq \Tr\{\L_+(\rho)\}=\Tr\{\M(\rho)\}\leq \Tr\{|\M(\rho)|\} =\|\M(\rho)\|_1,\qquad \forall\rho\in\DD_0,\,\,\rho\geq 0.
$$
In order to take also non-positive matrices into account we look at general matrix units.
\begin{lem}\label{schwarztype}
For $(\phi,\psi)\subset\DD(M)$ we have
$$
\|\L_+( |\phi \rangle\langle\psi|)\|_1\leq  \left(\|\L_+( |\phi \rangle\langle\phi|)\|_1 \right)^{1/2}\cdot
\left(\|\L_+( |\psi \rangle\langle\psi|)\|_1 \right)^{1/2}.
$$
\end{lem}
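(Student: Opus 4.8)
The plan is to use the generalized Kraus representation of $\L_+$ that the standard form furnishes, together with the form identity (\ref{formequal}), and to apply the Cauchy--Schwarz inequality twice: once for the trace norm of a series of rank-one operators, and once for the sum over the Kraus index $k$.

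First I would observe that, subtracting $\M$ from $\L$ on $\DD_0$, Equation (\ref{gensform}) gives the representation $\L_+(\rho)=\sum_k L_k\tdt\rho\tdt L_k^\dag$; in particular, for $(\phi,\psi)\subset\DD(M)$, the prescription (\ref{genop1}) for the generalized operators $L_k^\dag$ yields
$$
\L_+(|\phi\rangle\langle\psi|)=\sum_k |L_k\phi\rangle\langle L_k\psi|.
$$
By (\ref{formequal}), read as the quadratic-form identity $\sum_k\|L_k\phi\|^2=\langle\phi|(M+M^\dag)|\phi\rangle$ on the form domain $\DD(M)$, the sequences $(\|L_k\phi\|)_k$ and $(\|L_k\psi\|)_k$ lie in $\ell^2$; hence, by Cauchy--Schwarz, $\sum_k\|L_k\phi\|\cdot\|L_k\psi\|<\infty$ and the series above converges absolutely in $\Tf(\HH)$. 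The triangle inequality for the trace norm together with $\||u\rangle\langle v|\|_1=\|u\|\cdot\|v\|$ then gives
$$
\|\L_+(|\phi\rangle\langle\psi|)\|_1\leq \sum_k \|L_k\phi\|\cdot\|L_k\psi\|\leq
\Bigl(\sum_k \|L_k\phi\|^2\Bigr)^{1/2}\Bigl(\sum_k \|L_k\psi\|^2\Bigr)^{1/2}.
$$

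It then remains to recognize each factor on the right. Taking the trace of $\L_+(|\phi\rangle\langle\phi|)=\sum_k|L_k\phi\rangle\langle L_k\phi|$ and using (\ref{formequal}) once more, $\sum_k\|L_k\phi\|^2=\Tr\{\L_+(|\phi\rangle\langle\phi|)\}$; since $\L_+$ maps positive operators to positive operators, $\L_+(|\phi\rangle\langle\phi|)\geq 0$ and this trace equals $\|\L_+(|\phi\rangle\langle\phi|)\|_1$, and likewise with $\psi$ in place of $\phi$. Substituting these two identities into the last display yields the claimed inequality. The only delicate point is the absolute convergence and the rearrangement of the $k$-sums in the intermediate steps, and this is exactly what the finiteness statement in (\ref{formequal}) provides. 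One can also argue without the Kraus operators, by applying complete positivity of $\L_+$ to the positive block element $|\Phi\rangle\langle\Phi|$ with $\Phi=\phi\otimes f_1+\psi\otimes f_2\in\HH\otimes\ell^2$, obtaining a positive $2\times2$ block matrix with diagonal blocks $A=\L_+(|\phi\rangle\langle\phi|)$, $C=\L_+(|\psi\rangle\langle\psi|)$ and off-diagonal block $B=\L_+(|\phi\rangle\langle\psi|)$, and then using the factorization $B=A^{1/2}KC^{1/2}$ with $\|K\|\leq 1$ together with $\|A^{1/2}KC^{1/2}\|_1\leq\|A^{1/2}\|_2\|C^{1/2}\|_2=\|A\|_1^{1/2}\|C\|_1^{1/2}$; but in the present context the Kraus route is the more direct one.
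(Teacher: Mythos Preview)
Your proof is correct and follows essentially the same route as the paper: both arguments use the Kraus representation $\L_+(|\phi\rangle\langle\psi|)=\sum_k|L_k\phi\rangle\langle L_k\psi|$, the triangle inequality for the trace norm, and the Cauchy--Schwarz inequality in $\ell^2$, identifying $\sum_k\|L_k\phi\|^2$ with $\|\L_+(|\phi\rangle\langle\phi|)\|_1$ via positivity. The only cosmetic difference is that you justify convergence through the form identity (\ref{formequal}), whereas the paper truncates to finitely many $k$ and passes to the limit; your added block-matrix alternative is a nice aside but not used in the paper.
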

\begin{proof}
We use generalized Kraus operators, see Section \ref{unboundedcp}. In case their number is infinite
we shall in the following first consider summations over finite $k\in\{0,\ldots, K\}$
and then in the end look at $K\rightarrow\infty$.
Moreover, the triangle inequality for the trace-norm
and the Schwarz inequality for the inner product in $\ell^2$ shall be used. We have
\beqa
\|\L_+( |\phi \rangle\langle\psi|)\|_1=  \|\sum_k |L_k \phi \rangle\langle L_k\psi| \|_1
\leq   \sum_k \| |L_k \phi \rangle\langle L_k\psi|\|_1 = \sum_k \|L_k \phi\|\cdot \| L_k\psi\| \nonumber\\
\leq  \left( \sum_k \|L_k \phi\|^2\right)^{1/2}\cdot\left(\sum_k \| L_k\psi\|^2 \right)^{1/2}
=\left(\|\L_+( |\phi \rangle\langle\phi|)\|_1 \right)^{1/2}\cdot
\left(\|\L_+( |\psi \rangle\langle\psi|)\|_1 \right)^{1/2}.  \nonumber
\eeqa
\end{proof}
For a general matrix unit we establish  approximations.
\begin{lem}\label{approx}
Consider two pairs,  $\{(\phi,\psi),\, (e,f)\}\subset\DD(M)$
which lie close to each other in the graph of $M$,
$$(\|\phi-e\|,\|M\phi-Me\|,\|\psi-f\|,\|M\psi-Mf\|)\subset[0,\eps].$$
Then for actions of super operators the following approximations hold
$$
\| \M( |\phi\rangle\langle\psi|)- \M( |e \rangle\langle f|)\|_1 \leq \eps\cdot (\|\phi\|+\|M\phi\|+\|\psi\|+\|M\psi\|+2\eps),
$$
and
$$
\| \L_+( |\phi\rangle\langle\psi|)- \L_+( |e \rangle\langle f|)\|_1 \leq \eps\cdot (\|\phi\|+\|M\phi\|+\|\psi\|+\|M\psi\|+2\eps).
$$
\end{lem}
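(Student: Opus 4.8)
The statement is a continuity (Lipschitz-type) estimate for the two super operators $\M$ and $\L_+$ along the graph of $M$. The natural strategy is to reduce both estimates to the single quantity $\||\phi\rangle\langle\psi| - |e\rangle\langle f|\|_1$ together with graph-norm control on the vectors, and then to handle $\L_+$ by invoking the Schwarz-type estimate of Lemma \ref{schwarztype} on a suitable difference. First I would write the telescoping identity
\beq
|\phi\rangle\langle\psi| - |e\rangle\langle f| = |\phi - e\rangle\langle\psi| + |e\rangle\langle\psi - f|, \nonumber
\eeq
and similarly split $M|\phi\rangle\langle\psi| - M|e\rangle\langle f| = |M\phi - Me\rangle\langle\psi| + |Me\rangle\langle\psi - f|$ and the right-multiplication term $|\phi\rangle\langle\psi|M^\dag - |e\rangle\langle f|M^\dag = |\phi - e\rangle\langle M\psi| + |e\rangle\langle M\psi - Mf|$, recalling that $\langle\,\cdot\,|M^\dag$ reads, in these bra-ket conventions, as the bra $\langle M\,\cdot\,|$. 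Using $\||u\rangle\langle v|\|_1 = \|u\|\cdot\|v\|$, the triangle inequality, and the bounds $\|\phi - e\|,\|M\phi - Me\|,\|\psi - f\|,\|M\psi - Mf\| \le \eps$ together with $\|e\| \le \|\phi\| + \eps$ and $\|Me\| \le \|M\phi\| + \eps$, each of the four resulting terms is bounded by $\eps$ times one of $\|\psi\|$, $\|M\psi\|$, $\|\phi\| + \eps$, $\|M\phi\| + \eps$; summing gives the first inequality, $\le \eps\cdot(\|\phi\| + \|M\phi\| + \|\psi\| + \|M\psi\| + 2\eps)$.

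For the second inequality I would again use bilinearity of $\L_+$ in the dyadic arguments,
\beq
\L_+(|\phi\rangle\langle\psi|) - \L_+(|e\rangle\langle f|) = \L_+(|\phi - e\rangle\langle\psi|) + \L_+(|e\rangle\langle\psi - f|), \nonumber
\eeq
and apply Lemma \ref{schwarztype} to each term:
\beq
\|\L_+(|\phi - e\rangle\langle\psi|)\|_1 \le \|\L_+(|\phi - e\rangle\langle\phi - e|)\|_1^{1/2}\cdot\|\L_+(|\psi\rangle\langle\psi|)\|_1^{1/2}, \nonumber
\eeq
and likewise for the other term. Here one uses that $\L_+$ is completely positive, hence $\|\L_+(|\gamma\rangle\langle\gamma|)\|_1 = \Tr\{\L_+(|\gamma\rangle\langle\gamma|)\}$, and the domination $\Tr\{\L_+(|\gamma\rangle\langle\gamma|)\} = \Tr\{\M(|\gamma\rangle\langle\gamma|)\} = 2\,\RE\langle\gamma|M|\gamma\rangle \le 2\|\gamma\|\cdot\|M\gamma\|$ already recorded just before the lemma. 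Thus $\|\L_+(|\phi - e\rangle\langle\phi - e|)\|_1 \le 2\|\phi - e\|\cdot\|M(\phi - e)\| \le 2\eps^2$, so this square root is $\le \eps\sqrt{2}$; and $\|\L_+(|\psi\rangle\langle\psi|)\|_1^{1/2} \le (2\|\psi\|\cdot\|M\psi\|)^{1/2}$. One then collects the terms and uses elementary inequalities such as $(ab)^{1/2} \le \half(a + b)$ to bring the bound into the stated form $\eps\cdot(\|\phi\| + \|M\phi\| + \|\psi\| + \|M\psi\| + 2\eps)$; the $K\to\infty$ passage for infinitely many Kraus operators is handled exactly as in Lemma \ref{schwarztype}, by monotone convergence of the partial sums.

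The only genuinely delicate point is bookkeeping: one must be careful that the crude term-by-term bound after applying Schwarz does not come out larger than the claimed right-hand side, so the pairing of factors in the telescoping split and the choice of which vector absorbs the extra $\eps$ matters. Concretely, after the split one gets a bound of the shape $\eps\sqrt2\,(\sqrt{2\|\psi\|\|M\psi\|} + \sqrt{2\|e\|\|Me\|}\,)$, and with $\|e\| \le \|\phi\| + \eps$, $\|Me\| \le \|M\phi\| + \eps$ plus $\sqrt{2ab} \le a + b$ this is $\le \eps\,(2\sqrt{\|\psi\|\|M\psi\|} + 2\sqrt{(\|\phi\|+\eps)(\|M\phi\|+\eps)}\,) \le \eps\,(\|\psi\| + \|M\psi\| + \|\phi\| + \|M\phi\| + 2\eps)$ — so it does close, but the inequalities must be chained in this order. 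I would flag that this estimate, together with Lemma \ref{schwarztype}, is what lets one extend $\L_+$ and $\M$ continuously from finite dyadic sums with vectors in $\DD_e$ to all of $\DD_0$, which is the actual purpose of the lemma in the larger argument of Section \ref{standardtoadc}.
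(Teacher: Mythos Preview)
Your proof is correct and follows essentially the same route as the paper: the same telescoping split $|\phi\rangle\langle\psi| - |e\rangle\langle f| = |\phi-e\rangle\langle\psi| + |e\rangle\langle\psi-f|$, the same four-term bound for $\M$, and for $\L_+$ the same application of Lemma~\ref{schwarztype} followed by domination on positive elements and the AM-GM step $2\sqrt{ab}\le a+b$, with $\|e\|\le\|\phi\|+\eps$, $\|Me\|\le\|M\phi\|+\eps$ closing the estimate. The only cosmetic difference is that you bound $\|\L_+(|\gamma\rangle\langle\gamma|)\|_1$ via $\Tr\{\L_+(|\gamma\rangle\langle\gamma|)\}=\Tr\{\M(|\gamma\rangle\langle\gamma|)\}=2\,\RE\langle\gamma|M\gamma\rangle\le 2\|\gamma\|\,\|M\gamma\|$, whereas the paper passes through $\|\M(|\gamma\rangle\langle\gamma|)\|_1$ first; both yield the same $2\|\gamma\|\,\|M\gamma\|$ bound.
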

\begin{proof}
\beqa
&&\| \M( |\phi\rangle\langle\psi|)- \M( |e \rangle\langle f|)\|_1 =
\| \M( |\phi-e\rangle\langle\psi|) + \M( |e \rangle\langle\psi- f|)\|_1 \nonumber\\
 &\leq& \|M\phi-M e\|\tdt\|\psi\| + \|\phi-e\|\tdt\|M\psi\| +\|Me\|\tdt\|\psi- f\|+\|e\|\tdt\|M\psi- M f\| \nonumber\\
&\leq& \eps\cdot (\|\psi\|+\|M\psi\|+\|\phi\|+\eps+\|M\phi\|+\eps).
\nonumber
\eeqa
For the action of $\L_+$ we use Lemma \ref{schwarztype} and domination by $\M$ when acting on positive $\rho$. Therefore, we have
\beqa
&&\| \L_+( |\phi\rangle\langle\psi|)- \L_+( |e \rangle\langle f|)\|_1 =       
\| \L_+( |\phi-e\rangle\langle\psi|) + \L_+( |e \rangle\langle\psi- f|)\|_1 \nonumber\\
&\leq&  \left(\|\L_+( |\phi -e\rangle\langle\phi-e|)\|_1 \right)^{1/2}\cdot
\left(\|\L_+( |\psi \rangle\langle\psi|)\|_1 \right)^{1/2} \nonumber\\
&& \qquad\qquad + \left(\|\L_+( |e\rangle\langle e|)\|_1 \right)^{1/2}\cdot
\left(\|\L_+( |\psi -f \rangle\langle\psi -f|)\|_1 \right)^{1/2}\nonumber\\
&\leq&  \left(\|\M( |\phi -e\rangle\langle\phi-e|)\|_1 \right)^{1/2}\cdot
\left(\|\M( |\psi \rangle\langle\psi|)\|_1 \right)^{1/2}     \nonumber\\
&& \qquad\qquad + \left(\|\M( |e\rangle\langle e|)\|_1 \right)^{1/2}\cdot
\left(\|\M( |\psi -f \rangle\langle\psi -f|)\|_1 \right)^{1/2}\nonumber\\
&\leq&  2\eps\cdot \left[ ( \|M\psi \|\tdt\|\psi\| )^{1/2}
 +  (\|M e\|\tdt\|e\| )^{1/2} \right]      \nonumber\\&\leq&
 \eps\cdot (\|M\psi\|+\|\psi\|+\|M\phi\|+\|\phi\|+2\eps).
\nonumber
\eeqa
Here we used the inequality between the geometric and arithmetic mean to transform the penultimate line.
\end{proof}

\begin{proposition}\label{easypart}
For a generator $\L$ in generalized standard form
there exists a basis $\{e_k\}$ such that $\DD_e$ is a core for $M$,
and $\DD_e^2$ is a core for $\L$.
\end{proposition}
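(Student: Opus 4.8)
The plan is to construct the basis $\{e_k\}$ by hand so that $\DD_e$ consists of vectors that are well-behaved with respect to $M$, and then to upgrade "$\DD_e$ is a core for $M$" to "$\DD_e^2$ is a core for $\L$" using the continuity estimates already established. First I would exploit that $M$ generates the contraction semigroup $e^{-tM}$ on $\HH$ (Proposition \ref{contractor}): this gives a dense subspace of analytic (or at least smooth) vectors, e.g. the range of $\int_0^\infty \varphi(t) e^{-tM}\,dt$ for test functions $\varphi$, or more simply the dense set $\DD(M^\infty)$ obtained from the resolvent. Since $\HH$ is separable, I would pick a countable subset of such vectors whose closed span is all of $\HH$, orthonormalize it by Gram--Schmidt (checking that Gram--Schmidt stays inside $\DD(M)$, which it does because $\DD(M)$ is a linear subspace), and take the resulting orthonormal basis $\{e_k\}$. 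By construction $\DD_e = \lspan\{e_k\}$ is dense in $\HH$ and contained in $\DD(M)$; to see it is a core for $M$ one argues that it is dense in $\DD(M)$ in the graph norm, which follows because the original vectors were chosen dense in the graph norm (resolvent-smooth vectors are always a core for the generator) and Gram--Schmidt is a graph-norm-continuous triangular change of basis on each finite block.

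The second half is to promote this to a statement about $\L$. I would start from $\DD_0$, the finite linear combinations of $|\phi\rangle\langle\psi|$ with $\phi,\psi\in\DD(M)$, which by Definition \ref{gksldh} is a core for $\L$ (it is dense in $\DD(\L)$ in the graph norm $\|\cdot\|_\L$). So it suffices to approximate each $|\phi\rangle\langle\psi|$, $\phi,\psi\in\DD(M)$, by elements of $\DD_e^2$ in $\|\cdot\|_\L$. Given $\eps>0$, pick $e,f\in\DD_e$ with $\|\phi-e\|,\|M\phi-Me\|,\|\psi-f\|,\|M\psi-Mf\|\le\eps$ (possible since $\DD_e$ is a core for $M$). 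Then $|e\rangle\langle f|$ need not lie in $\DD_e^2$, but it is a finite sum of terms $|e_k\rangle\langle e_\ell|$, so it does lie in $\DD_e^2$. Lemma \ref{approx} gives
$$
\|\M(|\phi\rangle\langle\psi|)-\M(|e\rangle\langle f|)\|_1 + \|\L_+(|\phi\rangle\langle\psi|)-\L_+(|e\rangle\langle f|)\|_1 \le 2\eps\cdot(\|\phi\|+\|M\phi\|+\|\psi\|+\|M\psi\|+2\eps),
$$
and since $\L = \L_+ - \M$ on $\DD_0$, the same bound (up to the constant $2$) controls $\|\L(|\phi\rangle\langle\psi|)-\L(|e\rangle\langle f|)\|_1$; the trivial bound $\||\phi\rangle\langle\psi|-|e\rangle\langle f|\|_1\le \|\phi-e\|\|\psi\|+\|e\|\|\psi-f\|\le\eps(\|\psi\|+\|\phi\|+\eps)$ handles the $\|\cdot\|_1$ part of the graph norm. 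Hence $|e\rangle\langle f|\to|\phi\rangle\langle\psi|$ in $\|\cdot\|_\L$ as $\eps\to0$. Taking finite sums, every element of $\DD_0$ is a $\|\cdot\|_\L$-limit of elements of $\DD_e^2$, so $\DD_e^2\subset\DD_0$ is dense in $\DD_0$ in graph norm, and since $\DD_0$ is a core for $\L$, so is $\DD_e^2$.

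The main obstacle I anticipate is the first step: ensuring that one can choose the basis $\{e_k\}$ simultaneously inside $\DD(M)$, spanning a core for $M$, and orthonormal. Orthonormality is what Definition \ref{decomposable} literally requires (a "basis" of $\HH$), so one cannot just take an arbitrary dense set of smooth vectors; the Gram--Schmidt step must be shown not to destroy membership in $\DD(M)$ nor the core property, which is where a little care is needed. Everything after that is a routine $\eps$-chase powered by Lemma \ref{approx} and the definition of $\DD_0$ as a core. One should also double-check separability is genuinely used (it is: to get a \emph{countable} orthonormal basis) and that the analytic/smooth-vector construction indeed produces a graph-norm-dense subspace of $\DD(M)$, which is standard semigroup theory.
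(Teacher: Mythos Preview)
Your proposal is correct, and the second half---promoting ``$\DD_e$ is a core for $M$'' to ``$\DD_e^2$ is a core for $\L$'' by approximating each $|\phi\rangle\langle\psi|\in\DD_0$ by some $|e\rangle\langle f|\in\DD_e^2$ via Lemma~\ref{approx}---is exactly what the paper does.

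For the first half the paper takes a more direct route than yours. Instead of invoking the semigroup $e^{-tM}$ and its analytic or smooth vectors, the paper uses only that $M$ is closed: the graph $\{(\phi, M\phi): \phi \in \DD(M)\}$ is then itself a separable Hilbert space in the graph norm, so it admits an orthonormal basis $\{(\phi_k, M\phi_k)\}$. The first components $\{\phi_k\}$ are automatically total in $\HH$ (since $\DD(M)$ is dense) and their linear span is graph-norm dense by construction; Gram--Schmidt in $\HH$ then produces the desired orthonormal basis $\{e_k\}$, and since both Gram--Schmidt and its inverse are finite triangular linear combinations, $\lspan\{e_k\} = \lspan\{\phi_k\}$ is a core for $M$. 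This bypasses semigroup theory entirely and also sidesteps the small inconsistency in your write-up, where you first pick a countable set ``whose closed span is all of $\HH$'' but later need it to have been chosen graph-norm dense in $\DD(M)$; the paper's basis-of-the-graph construction achieves both simultaneously. Your route via smooth or resolvent-regularized vectors works as well, but is a detour.
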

\begin{proof}
The graph $\{(\phi,M\phi),\,\phi\in \DD(M)\}$
of the closed operator $M$
forms a complete Hilbert space with the graph norm $\|(\phi,M\phi)\|^2=\|\phi\|^2+\|M\phi\|^2$.
In this Hilbert space there exists a basis $\{(\phi_k,M\phi_k)\}$.
Since $\DD(M)$ is dense in $\HH$, the $\{\phi_k\}$ make a total set, and by the Gram Schmidt orthogonalization
one can construct out of them a basis $\{e_k\}$ of $\HH$.
Both the Gram Schmidt procedure and its reverse, representing each $\phi_k$
by the $\{e_k\}$, involve only finite linear combinations,
so the subspace $\DD_e$ is a core for $M$.

Now  a general $\rho\in\DD_0$ is of finite rank and thus a finite sum of matrix units,
$$
\rho=\sum_{n=1}^N |\phi_n\rangle\langle\psi_n|.
$$
For any $\eps>0$ there are vectors $e_n,\,f_n$ in $\DD_e$ close to $\phi_n,\,\psi_n$ in graph norm.
Due to these approximations in graph norm Lemma \ref{approx} can be applied. Combining both inequalities stated there, yields
$$
\| \L( |\phi_n\rangle\langle\psi_n|)- \L( |e_n \rangle\langle f_n|)\|_1 \leq 2\eps\cdot \left(\|\phi_n\|+\|M\phi_n\|+\|\psi_n\|+\|M\psi_n\|+2\eps\right).
$$
Then, by the triangle inequality and by summation of all these inequalities, we arrive at
\beqa
&&\|\L( \rho) - \L( \sum_n|e_n \rangle\langle f_n|)\|_1 \leq
\sum_n\| \L( |\phi_n\rangle\langle\psi_n|)- \L( |e_n \rangle\langle f_n|)\|_1 \nonumber\\
&&\leq N\tdt\eps\cdot (\max_n(\|\phi_n\|+\|M\phi_n\|+\|\psi_n\|+\|M\psi_n\|)+2\eps).
\eeqa

Now $\DD(\L)$ is the closure of the set of operators with finite rank in $\DD(\L)$,
so it follows that $\DD_e^2$ is a core for $\L$.
\end{proof}


\section{Discussion}\label{discussion}

QDS can be analyzed on three levels: the underlying Hilbert space, the space of Hilbert space operators,
and the ``super-operators'' acting on density operators.
On the ``highest level'', the level of super-operators, we
refer to general abstract theorems concerning strongly continuous semigroups
in Banach spaces.
The existence of a generator $\L$, under the condition that $\Tt^t$ is strongly continuous and trace preserving,
is the presupposition and basis for all our definitions, propositions and theorems.
On the intermediate level we discuss the form in which the generator can be represented
using operators on the Hilbert space. This problem is,
as is shown in this paper, related to the appearance of the QDS
on the lowest level.
On this level we may see, in principle, the evolution equation as a set of
coupled differential equations for
matrix elements in certain bases of $\HH$.

What is new in our presentation, in comparison to Davies' definition of a ``Standard Form''?
We use $M^\dag$ and $L^\dag$ at places where Davies puts - formally - the adjoint operators.
In the present context, the adjoint $M^\ast$ exists with a dense domain of definition, but its range is in general not large enough;
the operators $L_k$ are in general not closable, and the $L_k^\ast$  not densely defined.
These features can be explicitly seen in the examples illustrated in Section \ref{examples}.
Looking closely into Davies' papers, one may observe that he just writes  $M^\ast$ and $L^\ast$,
without really using these adjoint operators, when giving a precise sense to the formal notations.
Being honest, we admit that our ``generalized standard form'' is therefore not really new.
But his kind of notation may lead to some misunderstandings, which we try to avoid.
To characterize $M^\dag$ and $L^\dag$ more precisely,
one may note that these operators map $\HH$ into the dual space of $\HH_M$,
in the Gelfand triple
$\HH_M \subset \HH\subset \HH_M^\ast$,
where $\HH_M$ is
the domain of definition of $M$ equipped with the norm $\|\phi\|_M$, $\|\phi\|_M^2 = \|\phi\|^2+ \|M\phi\|^2$.

\textbf{Remark on the notation:}
      Our notation indicates that $\L(\rho)$ denotes  a trace class operator defined on the whole Hilbert space $\HH$.
      The restriction onto a domain concerns the set of density matrices, not of Hilbert space vectors.
      We choose nevertheless to write $\L$ in a way which is similar to Davies' expression of a standard form,
      and which looks like a formula in matrix analysis.
      This is a matter of taste; if one doesn't like the $M^\dag$ and the $L^\dag$,
      one can write, instead of (\ref{gensform}) and (\ref{formequal})
\beq\label{alternform}
\L(\rho) = \sum_k  L_k (L_k \rho^\ast)^\ast - M\tdt\rho - (M\tdt\rho^\ast)^\ast,
\eeq
with
\beq
\sum_k \| L_k \phi\|^2 =
\langle \phi | M \phi\rangle+\langle M \phi |\phi\rangle,\qquad\forall \phi\in\DD(M). \label{alterformequal}
\eeq

\textbf{Remark on the domains:} To construct the operators appearing in the generalized standard form
      it is not sufficient to assume that $\DD_e^2$ is contained in the domain of definition of $\L$.
      As a well-known counter example consider an induced unitary evolution $\Tt^t(\rho)=U_t\rho U_t^\ast$, where the generator
      of the group $U_t$ is a Laplacian on a finite interval $[a,b]$ with Dirichlet boundary conditions.
      With a basis $\{e_k\}$, where the limits $x\rightarrow a$ and  $x\rightarrow b$ of each function $e_k(x)$
      and of its derivative is $0$, the matrix elements give no information on boundary conditions. They define an
      hermitian operator with nonzero defect indices.

From the point of view of a functional analyst some important problems still demand further studies:
One is the question under which conditions a strongly continuous QDS is matrix normal
in an appropriate basis.
Another problem is how to reverse the implications, to
give a functional analytic characterization of the generators, appearing in the form
given in Definition \ref{gksldh},
without the a priori assumption that the semigroup is trace preserving.
This problem appears in the literature, starting essentially with \cite{C91}, continuing with
many more investigations, e.g. \cite{CF98},  \cite{FPM12},
as finding conditions for the generated semigroup to be ``conservative''.
Last but not least, one would like to have a ``standard non-standard form''
for generators which are not matrix normal.
We have examples, one from Davies and one from Holevo,
of such non-standard generators, \cite{D77,K95,H96}.
Both are built as extensions of generators in standard form,
but generating semigroups which do not preserve the trace.
The partition $\L=\L_+-\M$, defined here with Equations (\ref{gensform},\ref{formequal},\ref{expander},\ref{msuper})
is extended to $\L=\L_n+\L_+-\M$, including a non-standard completely positive map $\L_n$
which is not to be represented with generalized Kraus operators. Is this the ultimate most general form? A question that still needs to be answered.

\section{Examples}\label{examples}

\subsection{Translation, followed by drop out}\label{dropout}

The following simple example shows all the special features of the present analysis.
(A similar example is presented in \cite{A02}, sect. 2)
\par
Consider the Hilbert space
$$
\HH=\LL^2(\R_+)\oplus\C \,\, =\,\left\{ \left( \begin{array}{c}
                                                  \psi(x)\\
                                                   a
                                                \end{array}
\right)\right\}$$
and define $$|\omega\rangle :=  \left( \begin{array}{c}
                                                  0 \\
                                                  1
                                                \end{array}
\right),\quad \Omega := |\omega\rangle\langle\omega|.
$$
Moreover,
$$
\Tt^t(\rho)= \Tr(P_t \rho)\Omega + S_t \rho S_t^\ast
$$
with
$$
S_t|\omega\rangle=|\omega\rangle, \quad (S_t\psi)(x)=\psi(x+t),
\quad P_t |\omega\rangle=0,\quad  (P_t\psi)(x)=\chi_{[0,t]}(x)\psi(x).
$$
The generator is given by
$$\L(\rho)=L\rho\,L^\dag +\partial \rho+\rho\partial^\dag ,$$
where $\partial|\omega\rangle=0$ and $
(\partial\psi)(x)=\psi'(x)$ without boundary conditions, and
$$
 L         \left( \begin{array}{c}
                                                   \psi(x)\\
                                                  a
                                                \end{array}
\right) =  \left( \begin{array}{c}
                                                 0 \\
                                                 \psi(0)
                                                \end{array}
\right),\qquad \DD(L) \supset \DD(\partial)
$$
Note that $L$ -- formally $L=|\omega\rangle\langle\delta(x)|$ -- is not closable,
so $L^\ast$ is not a densely defined operator. As in Equation (\ref{genop1}), $L^\dag$   is a generalized operator,
a linear map from  $\HH$, the space of ket-vectors, to a space
of unbounded linear functionals on the Hilbert space of bra-vectors.
Formally $L^\dag=|\delta(x)\rangle\langle\omega|$.

The dual semigroup acts as
$$\Tt^{t\ast}(A)= \langle\omega|A|\omega\rangle  P_t +  S_t^\ast A S_t.
$$
It does not map $\CC (\HH)$, the space of compact operators, into $\CC (\HH)$. It, therefore, does not fulfill the condition
stated in Theorems 3.4 and 4.1 in \cite{D77b}, a requirement necessary in Davies' approach to prove
that the generator can appear in standard form.
It is not even strongly continuous and has no generator in the usual sense, namely
as an operator mapping $\BB(\HH)$ to $\BB(\HH)$.

One could define a generalized ``standard'' form for the dual generator,
mapping $\BB(\HH)$ to the set of quadratic forms $\mathfrak{Q}(\HH)$, by
$$
\L'(A)=L^\dag A\,L + \partial^\dag A+A\partial.
$$
It involves a completely positive mapping of $A$ to the quadratic form $L^\dag A\,L$.
A general theory of such extended Kraus operators is given in the Appendix
in Definition \ref{genops} and Theorem \ref{krausthm}.
Another characterization has been stated by Holevo in \cite{H97}.
He considered the generator as a bilinear functional, a ``form'', acting onto a pair
which consists of an observable and a density matrix.

We note that this example is not far from representing
a physical process as described in the following subsection.

\subsection{Quantum reflection with sticking probability}\label{sticking}

We present a simple model for a process observed in modern physics, see \cite{ZB83,BD06}.
Consider a hydrogen atom that moves freely in half space and is reflected from the surface of ultracold helium, the plane which is the boundary to the other half space. With some probability, however, it remains bounded to this surface
and creates there a ``ripplon''.\par
To describe this situation in a mathematical model, we consider the Fock space $\F$ of ripplons,
the space\footnote{
What we denote here by $\LL^2$ is the usual Hilbert space $L^2$, since the letter $L$ is reserved for operators.}
 $\LL^2(\R^2, dxdy)$ of wave functions for the hydrogen bounded to the surface,
and the space of wave functions in the half space, $\LL^2(\R^2, dxdy)\otimes \LL^2(\R_+,dz)$.
The total Hilbert space, therefore, is
\beqa
\HH&=& \F\otimes[\LL^2(\R^2, dx dy)\oplus \LL^2(\R^2, dx dy)\otimes \LL^2(\R_+,dz)]\nonumber\\
&=& \F\otimes \LL^2(\R^2, dx dy)\otimes (\C\oplus \LL^2(\R_+,dz)).
\eeqa
The free movement yields the Hamiltonian, here forming the imaginary part of the ``mover-and-contractor'' $M$,
$$
M=i(H_R +H_2 + H_z).
$$
$H_R$ is the self-adjoint Hamiltonian for the free ripplons.
$H_2$ is also self-adjoint and is the negative Laplacian on the plane, representing
both the free movement of the hydrogen atom when it is bounded to the surface,
and the parallel part of free movement in the half plane.
The evolution in the direction orthogonal to the boundary is modeled by $H_z=-\frac{d^2}{dz^2}$
with boundary conditions $\psi'(0)=w\psi(0)$, $w\in\C$, $\textrm{Im}(w)>0$.
This part is not self-adjoint, so $M+M^\dag=i(H_z-H_z^\dag)$. Moreover, the imaginary part of $w$ is related to the probability of sticking.
The transfer of the hydrogen atom into the layer at the surface is modeled by $L_z$ acting on
$\C\oplus\LL^2(\R_+,dz)$ as
$$
L_z|\psi\rangle=\left(2\,\textrm{Im}(w)\right)^{1/2}|\omega\rangle\psi(0),
$$
and $L_z|\omega\rangle=0$,
where $|\omega\rangle$ is a unit vector in the Hilbert space $\C$.
To complete the physics, there is an isometry from $\F\otimes \LL^2(\R^2)\otimes\C$ to $\F\otimes \LL^2(\R^2)\otimes\C$, namely the scattering matrix $S$.
It creates a ripplon with angular momentum $\kappa$
and reduces the momentum of the sticking hydrogen from $k$ to $k-\kappa$.
These vectors of momenta are all in the plane and $\kappa$ has the same direction as $k$. The absolute value of $\kappa$ depends on $|k|$ and is consistent
with the conservation of total energy.
On the total Hilbert space the generalized Kraus operator $L=S\tdt L_z$ is then acting.

From the perspective of mathematical physics, one would like to improve this model
and also to derive it from Schr\"{o}dinger quantum mechanics. Our conjecture is that this task can be fulfilled by using a strong coupling limit.

\section{Appendix: Generalized Kraus operators for unbounded completely positive maps}\label{unboundedcp}

\subsection{ Generalization of ``completely positive map''}

K. Kraus, \cite{K71} gave a representation of the Stinespring theorem
in the form
$$\Q(W)=\sum_k K_k\tdt W \tdt K_k^\ast,
$$
where $\Q$ is an ``operation'', a completely positive bounded map from $\mathfrak{T}(\HH)$ to $\mathfrak{T}(\HH)$.
It has been observed several times, e.g. in \cite{SG05}, that
in case of a finite dimensional Hilbert space $\HH$ this representation can be deduced from
the Choi-Jamiolkowski dualism of maps and states.

We give here a generalization to infinite dimensions and
densely defined unbounded completely positive maps.
Moreover, we consider the case, where the target of the map is not a space of operators
but the set of quadratic forms, $\mathfrak{Q}(\HH)$.

\begin{define}\textbf{Generalized Operators:}\label{genops}                    

Consider two bases, $\{e_n\}$ and $\{f_n\}$ in the Hilbert space of infinite dimension.
The operators $K$, used in the following, are linear mappings from $\DD_e$
into the algebraic dual of
$\DD_f$. Their adjoints, $K^\dag$, are operators from $\DD_f$
into the algebraic dual of $\DD_e$, related to $K$ by
\beq
\langle e|K^\dag|f\rangle= \langle f|K|e\rangle^\ast,
\eeq
for $e\in\DD_e$, $f\in\DD_f$.
\end{define}
In the sense of Holevo's investigations
these operators can also be considered as bilinear forms on $\lspan\{|f\rangle\langle e|\}$ and $\lspan\{|e\rangle\langle f|\}$ respectively.
In another sense one may characterize $M^\dag$ and $L^\dag$ more precisely
as acting  in the Gelfand triple
$\HH_M \subset \HH\subset \HH_M^\ast$.
These operators map there $\HH$ into the dual space of $\HH_M$, which is
the domain of definition of $M$ equipped with the norm $\|\phi\|_M$, $\|\phi\|_M^2 = \|\phi\|^2+ \|M\phi\|^2$.

\begin{thm}\textbf{Generalized Kraus-Operators}\label{krausthm}                  

If $\Q$ is a completely positive map from $\DD_e^2$ into $\mathfrak{Q}(\HH)$ with $\DD_f$ as common form domain,
there exists a countable set of ``generalized Kraus operators'' $K_\alpha$
such that
\beq
\Q(\rho) = \sum_\alpha  K_\alpha \rho K_\alpha^\dag.
\eeq
If, moreover,  each quadratic form $\Q(\rho)$ can be
associated with an operator $\breve{\Q}(\rho)$ in $\BB(\HH)$,
such that $\langle g|\Q(\rho)|f\rangle=\langle g|\breve{\Q}(\rho)f\rangle$ holds for all $f\in\DD_f$ and for all $g\in\DD_f$,             
each $K_\alpha$ is an operator mapping $\DD_e$ into $\HH$.
\end{thm}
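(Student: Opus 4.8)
The plan is to transplant the finite-dimensional Choi--Jamiolkowski argument (cf.\ \cite{SG05}) to the present form-valued, unbounded setting via a Gelfand--Naimark--Segal-type construction. First I would encode $\Q$ in a single sesquilinear ``Choi form''. On the algebraic tensor product $\overline{\DD_e}\odot\DD_f$, where $\overline{\DD_e}$ is $\DD_e$ equipped with its conjugate linear structure, define $B$ by sesquilinear extension of
$$B\bigl(\bar e\otimes f,\ \bar e'\otimes f'\bigr):=\langle f\,|\,\Q(|e\rangle\langle e'|)\,|\,f'\rangle,\qquad e,e'\in\DD_e,\ f,f'\in\DD_f .$$
The conjugation on the first factor is exactly what turns this into a genuine sesquilinear form (antilinear in the first, linear in the second pair of arguments). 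The complete-positivity criterion for $\Q$ --- non-negativity of $\sum_{k,\ell}\langle\psi_k|\Q(|\phi_k\rangle\langle\phi_\ell|)|\psi_\ell\rangle$ for all finite families $\{(\phi_k,\psi_k)\}\subset\DD_e\times\DD_f$ --- then says precisely that $B(v,v)\ge 0$ for every $v=\sum_k\bar\phi_k\otimes\psi_k$, i.e.\ $B\ge 0$.

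From positivity of $B$, Cauchy--Schwarz shows that the radical $\Nn=\{v:B(v,v)=0\}$ is a subspace; I would form the quotient $\bigl(\overline{\DD_e}\odot\DD_f\bigr)/\Nn$, equip it with the induced inner product, and complete it to a Hilbert space $\K$. Since $\DD_e$ and $\DD_f$ are the linear spans of the countable bases $\{e_n\}$, $\{f_n\}$, the tensor product has countable dimension, so $\K$ is separable; fix a countable orthonormal basis $\{\xi_\alpha\}$ of $\K$ and write $[\,\cdot\,]$ for the canonical linear map $\overline{\DD_e}\odot\DD_f\to\K$. For each $\alpha$ define a generalized operator $K_\alpha$ by
$$\langle f\,|\,K_\alpha\,|\,\phi\rangle:=\bigl\langle\,[\bar\phi\otimes f]\,,\ \xi_\alpha\,\bigr\rangle_\K,\qquad \phi\in\DD_e,\ f\in\DD_f ,$$
which is linear in $\phi$ and antilinear in $f$; thus $K_\alpha$ maps $\DD_e$ into the (conjugate) algebraic dual of $\DD_f$ as in Definition \ref{genops}, and its adjoint $K_\alpha^\dag$ is the generalized operator there determined by $\langle e|K_\alpha^\dag|f\rangle=\langle f|K_\alpha|e\rangle^\ast$.

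It remains to verify the reconstruction formula. For a matrix unit $\rho=|e_k\rangle\langle e_\ell|$, unwinding the definitions of $K_\alpha$ and $K_\alpha^\dag$ gives $\langle f|K_\alpha\rho K_\alpha^\dag|g\rangle=\langle f|K_\alpha|e_k\rangle\,\overline{\langle g|K_\alpha|e_\ell\rangle}$, and Parseval's identity in $\K$ yields
$$\sum_\alpha\langle f\,|\,K_\alpha\rho K_\alpha^\dag\,|\,g\rangle=\bigl\langle\,[\bar e_k\otimes f]\,,\ [\bar e_\ell\otimes g]\,\bigr\rangle_\K=B(\bar e_k\otimes f,\ \bar e_\ell\otimes g)=\langle f\,|\,\Q(|e_k\rangle\langle e_\ell|)\,|\,g\rangle ;$$
by bilinearity this extends to all $\rho\in\DD_e^2$, so $\Q(\rho)=\sum_\alpha K_\alpha\rho K_\alpha^\dag$ as quadratic forms on $\DD_f$, the series understood pointwise on $\DD_f\times\DD_f$. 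For the ``moreover'' part, apply this to $\rho_\phi=|\phi\rangle\langle\phi|$ with $\phi\in\DD_e$: if $\Q(\rho_\phi)$ is the bounded operator $\breve\Q(\rho_\phi)$, then $\sum_\alpha|\langle f|K_\alpha|\phi\rangle|^2=\langle f|\breve\Q(\rho_\phi)|f\rangle\le\|\breve\Q(\rho_\phi)\|\,\|f\|^2$ for all $f\in\DD_f$, so for each fixed $\alpha$ the antilinear functional $f\mapsto\langle f|K_\alpha|\phi\rangle$ is bounded on the dense subspace $\DD_f$; by the Riesz lemma it is represented by a vector of $\HH$, which we take as $K_\alpha\phi$, and uniqueness of the extension preserves linearity in $\phi$. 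Hence each $K_\alpha$ maps $\DD_e$ into $\HH$.

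I expect the main obstacle to be the genuinely non-topological nature of the hypothesis: $\Q$ is only form-valued on $\DD_e^2$, so there is no a priori topology on $\overline{\DD_e}\odot\DD_f$ and the entire GNS step must be run from algebraic positivity alone, the output being only the \emph{generalized} operators of Definition \ref{genops}, valued in algebraic duals. Keeping the conjugations consistent among $\DD_e$, its conjugate space, the algebraic duals, and the bra--ket conventions, and pinning down the exact sense in which $\sum_\alpha K_\alpha\rho K_\alpha^\dag$ converges, are the delicate points; separability of $\K$ --- hence countability of $\{K_\alpha\}$ --- is where the countable-basis hypothesis is used.
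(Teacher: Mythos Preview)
Your argument is correct and, in essence, carries the same idea as the paper's: both encode $\Q$ in the ``Choi data'' $\langle f_i|\Q(|e_k\rangle\langle e_\ell|)|f_j\rangle$, realize these numbers as inner products of vectors in an auxiliary separable Hilbert space, and then read off the Kraus operators by expanding in an orthonormal basis. The ``moreover'' part is likewise identical in spirit to the paper's Corollary~\ref{krausdomain}.

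The route differs in how that auxiliary Hilbert space is built. The paper works concretely: it forms the truncated Choi operators $\hat\Q(E_N)=(\Q\otimes\one)\bigl(\sum_{i,j\le N}E_{i,j}\otimes E_{i,j}\bigr)$, takes their positive square roots, and sets $\Psi_{N;i,k}=\sqrt{\hat\Q(E_N)}\,|f_i\otimes e_k\rangle$; it then observes that the inner products $\langle\Psi_{N;i,k}|\Psi_{N;j,\ell}\rangle$ stabilize for $N\ge\max\{k,\ell\}$ and runs a Gram--Schmidt procedure (Proposition~\ref{conphi}) to manufacture $N$-independent vectors $\Phi_{i,k}$ with the prescribed Gram matrix. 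You bypass the truncations entirely and go straight to the abstract GNS/Kolmogorov step, quotienting $\overline{\DD_e}\odot\DD_f$ by the radical of the positive form $B$ and completing. This is cleaner and makes transparent that what is really used is only algebraic complete positivity; the paper's concrete detour, on the other hand, keeps the link to the finite-dimensional Choi--Jamio{\l}kowski picture visible and, via the Gram--Schmidt ordering, yields the extra information that for each fixed $(i,k)$ only finitely many $\alpha$ contribute, so the sum $\sum_\alpha$ in $\langle f_i|K_\alpha|e_k\rangle$ is actually finite term by term. In your version the $\alpha$-sum is an $\ell^2$-convergent Parseval sum, which is still perfectly adequate for the reconstruction formula since the $k,\ell$-sum over $\rho\in\DD_e^2$ is finite.
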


Note that there exist moreover several considerations of ``Completely Positive Maps''
in a different context, see e.g. \cite{CTU11,P12}.


\subsection{Proof of Theorem \ref{krausthm} and Corollaries on special cases}

We consider matrix units $E_{i,j}=|e_i\rangle\langle e_j|$, so that
\beq\label{supermatrix}
Q_{i,j;k,\ell}=\langle f_i| \Q( E_{k,\ell})|f_j\rangle.
\eeq
Since the Choi matrix $E=\sum_{i,j}E_{i,j}\otimes E_{i,j}$
in infinite dimensions is a quadratic form \cite{H11}, not an operator,
we use the approximating increasing sequence of bounded positive operators on $\HH\otimes\HH$
$$E_N=\sum_{i,j}^N E_{i,j}\otimes E_{i,j}=|\theta_N\rangle\langle\theta_N|, \qquad
\mathrm{with} \qquad|\theta_N\rangle=\sum_i^N |e_i\otimes e_i\rangle.$$
Next, we make use of complete positivity of $\Q$ and look at the map $\hat{\Q}=\Q\otimes\one$, which preserves positivity. It is, therefore, possible to form the positive square root of $\hat{\Q}(E_N)$
and define vectors
$$\Psi_{N;i,k}=\sqrt{\hat{\Q}(E_N)}|f_i\otimes e_k\rangle.$$
These vectors are unnormalized and change with $N$. But observe that, for $N\geq k$
$$\|\Psi_{N;i,k}\|^2=\langle f_i\otimes e_k|\hat{\Q}(E_N)|f_i\otimes e_k\rangle=
\langle f_i|\Q(E_{k,k})|f_i\rangle \leq \infty$$
and, for $N\geq k$ and $N\geq \ell$

\beqa
\langle\Psi_{N;i,k}|\Psi_{N;j,\ell}\rangle=\langle f_i\otimes e_k|\hat{\Q}(E_N)|f_j\otimes e_\ell\rangle &=& \nonumber \\
=\langle f_i\otimes e_k|\Q(E_{k,\ell})\otimes E_{k,\ell} |f_j\otimes e_\ell\rangle &=& \langle f_i|\Q(E_{k,\ell})|f_j\rangle.     \label{matrixT}
\eeqa

The vectors $\Psi$ will, most probably, go weakly to zero, as $N\rightarrow\infty$,
but their norms and inner products are  bounded, increasing in $N$ and remaining constant for large $N$.
So the internal ``geometry'' of the rhomboid which they form stays constant and can be represented with a fixed set of vectors. This can be done by mimicking the Gram-Schmidt procedure, generating an orthonormal set of vectors out of the $\Psi$-vectors for any finite $N$,
and mapping them onto a fixed basis.

\begin{proposition} \textbf{Construction of the $N$-independent vectors} \label{conphi}

A set of vectors $\Phi_{i,k}\in\HH\otimes\HH$ can be constructed
independently of $N$, such that
\beq \label{psiphi}
\langle\Phi_{i,k}|\Phi_{j,\ell}\rangle = \langle \Psi_{N;i,k}|\Psi_{N;j,\ell}\rangle, \qquad
\forall N\geq \max\{i,j,k,\ell\}.
\eeq
\end{proposition}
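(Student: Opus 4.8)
The plan is to show that the inner products $\langle\Psi_{N;i,k}|\Psi_{N;j,\ell}\rangle$ stabilise in $N$ and thereby define a positive semidefinite kernel on a countable index set, and then to realise that kernel as the Gram matrix of a single, $N$-independent family of vectors inside the fixed space $\HH\otimes\HH$ by a coherent Gram--Schmidt (Cholesky) construction carried out in one prescribed order.

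First I would record, directly from Equation (\ref{matrixT}), that for every $N\geq\max\{i,j,k,\ell\}$ one has $\langle\Psi_{N;i,k}|\Psi_{N;j,\ell}\rangle=\langle f_i|\Q(E_{k,\ell})|f_j\rangle$, a number that no longer depends on $N$. Writing $I:=\{(i,k):i,k\in\N\}$ and $G_{(i,k),(j,\ell)}:=\langle f_i|\Q(E_{k,\ell})|f_j\rangle$, the array $G$ is Hermitian, and any finite principal submatrix of $G$ all of whose indices have entries $\leq N$ is, by that identity, the Gram matrix of the vectors $\{\Psi_{N;i,k}\}_{i,k\leq N}$ and hence positive semidefinite. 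Thus $G$ is a positive semidefinite kernel on the countable set $I$, and the proposition reduces to producing vectors $\Phi_{i,k}\in\HH\otimes\HH$ with $\langle\Phi_{i,k}|\Phi_{j,\ell}\rangle=G_{(i,k),(j,\ell)}$; formula (\ref{psiphi}) then follows by feeding this back through Equation (\ref{matrixT}).

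To build the vectors I would enumerate $I$ as $\iota_1,\iota_2,\dots$; every initial segment $I_M:=\{\iota_1,\dots,\iota_M\}$ is finite, so there is $N_M$ with $I_M\subseteq\{(i,k):i,k\leq N_M\}$ and $G$ restricted to $I_M\times I_M$ is a genuine Gram matrix of $\Psi$-vectors. Fix once and for all an orthonormal basis $\{g_m\}_{m\in\N}$ of the separable space $\HH\otimes\HH$, and construct $\Phi_{\iota_m}\in\lspan\{g_1,\dots,g_m\}$ recursively. Given $\Phi_{\iota_1},\dots,\Phi_{\iota_{m-1}}$, positive semidefiniteness of $G$ on $I_m$ provides a vector $p\in\lspan\{\Phi_{\iota_1},\dots,\Phi_{\iota_{m-1}}\}$ with $\langle\Phi_{\iota_r}|p\rangle=G_{\iota_r,\iota_m}$ for all $r<m$ and with $\|p\|^2\leq G_{\iota_m,\iota_m}$ (this last inequality is exactly the statement that the relevant Schur complement is nonnegative); I then set
\[
\Phi_{\iota_m}:=p+\bigl(G_{\iota_m,\iota_m}-\|p\|^2\bigr)^{1/2}\,g_m .
\]
By construction $\langle\Phi_{\iota_r}|\Phi_{\iota_m}\rangle=G_{\iota_r,\iota_m}$ for $r<m$ and $\|\Phi_{\iota_m}\|^2=G_{\iota_m,\iota_m}$, while $\Phi_{\iota_1},\dots,\Phi_{\iota_{m-1}}$ are untouched. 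So the recursion is coherent, and relabelling $\Phi_{i,k}:=\Phi_{(i,k)}$ gives vectors in $\HH\otimes\HH$ whose Gram matrix is $G$, which is (\ref{psiphi}) for $N\geq\max\{i,j,k,\ell\}$.

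The one point that genuinely needs care is this coherence: a naive Gram--Schmidt applied to the $\Psi_{N;i,k}$ changes the ambient subspace with $N$, so it must instead be recast as a Cholesky-type factorisation of the \emph{single} stabilised matrix $G$, processed in a fixed order, in which enlarging the truncation $N$ (equivalently passing from $I_M$ to $I_{M+1}$) only appends new vectors and never perturbs earlier ones. Positive semidefiniteness of every finite principal submatrix of $G$ is precisely what legitimises each square root in the recursion, and separability of $\HH\otimes\HH$ supplies the basis $\{g_m\}$ that houses the construction. Conceptually this is just the Kolmogorov/GNS factorisation of a positive semidefinite kernel on a countable set into a separable Hilbert space, followed by an isometric embedding into $\HH\otimes\HH$; the recursion above is the hands-on version of it.
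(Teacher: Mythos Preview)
Your proof is correct and follows essentially the same route as the paper's own argument: both reduce to performing a Cholesky/Gram--Schmidt factorisation of the stabilised Gram matrix $G$ inside a fixed orthonormal basis of $\HH\otimes\HH$, processing the index pairs in one prescribed order so that earlier vectors are never modified. The paper carries this out by explicitly solving for triangular coefficients $\gamma_{\alpha,\beta}$ (treating the degenerate case $\gamma_{\beta,\beta}=0$ by hand), whereas you phrase the same recursion as ``projection $p$ plus orthogonal correction along $g_m$'' and identify it with the Kolmogorov/GNS construction; these are the same argument in different notation.
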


\begin{proof}
We first choose the ordering of indices for the basis $e_k$ in such a way, that $\Psi_{N;0,0}\neq 0$ and then define some order relation for the index-pairs $(i,k)$.
More precisely, we set $n=n(i,k):=\max\{i,k\}$, $\alpha=\alpha(i,k):=n^2-n+i-k$ and
denote $\Phi_\alpha=\Phi_{i,k}$.

Consider now an arbitrary basis $b_\alpha$ and make the Ansatz:
$$\Phi_\alpha = \sum_{\beta \leq\alpha} \gamma_{\alpha,\beta}\,b_\beta.$$
Note that $\|\Phi_0\|^2 \neq 0$,
hence $$\gamma_{0,0}:=\|\Phi_0\|>0.$$

Next, we proceed with nested inductions on $\alpha$ and $\beta$ with $\beta\leq\alpha$.
Having determined all $\gamma_{\alpha-1,\beta}$  for fixed $\alpha-1$, we go over
to fixed $\alpha$ and determine the sequence $\gamma_{\alpha,\beta}$, in increasing order of $\beta$. To this end, we solve Equation (\ref{psiphi}) and take for $\beta<\alpha$ the expansion
$$\langle\Phi_\beta|\Phi_\alpha\rangle =
\left[\sum_{\beta' <\beta} \gamma_{\beta,\beta'}^\ast\,\gamma_{\alpha,\beta'}\right] +
\gamma_{\beta,\beta}^\ast\,\gamma_{\alpha,\beta},
$$
into account, which determines $\gamma_{\alpha,\beta}$ in case $\gamma_{\beta,\beta}\neq 0$.

If $\gamma_{\beta,\beta}= 0$, the vector $\Phi_\beta$ is a linear combination of $\Phi_{\beta'}$
with $\beta'<\beta$. This already guarantees the validity of Equation (\ref{psiphi}).
In such a case we reject any use of $b_\beta$ and set $\gamma_{\alpha,\beta}=0$ for each $\alpha>\beta$,
to make sure that the induction on $\beta$ can be  finished at $\beta=\alpha$.
This procedure guarantees, at the end of induction on $\beta$, the  solvability of
$$\langle\Phi_\alpha|\Phi_\alpha\rangle =
\left[\sum_{\beta<\alpha} |\gamma_{\alpha,\beta}|^2\right] +
|\gamma_{\alpha,\alpha}|^2,
$$
enabling us to determine $\gamma_{\alpha,\alpha}\geq 0$
as the positive square root of $\|\Phi_\alpha\|^2-\sum |\gamma_{\alpha,\beta}|^2$.
\end{proof}

One may represent now the super matrix elements by combining Equations (\ref{supermatrix}, \ref{matrixT}, \ref{psiphi})
\beq\label{represent}
Q_{i,j;k,\ell}
=\langle\Phi_{i,k}|\Phi_{j,\ell}\rangle
=\sum_\alpha\langle\Phi_{i,k}|\alpha\rangle\langle\alpha|\Phi_{j,\ell}\rangle,
\eeq
where $|\alpha\rangle$ are some basis vectors in $\HH\otimes\HH$.
We shall consider the basis $|\alpha\rangle =b_\alpha$ with $\alpha =\alpha(j,\ell)$,
which was used in the proof of the Proposition \ref{conphi}.
Note that, due to the Gram-Schmidt-like procedure used to construct the $\Phi$-vectors, there appears for each $\Phi_{i,k}$
only a finite set of $\alpha$ in Equation (\ref{defkraus}).
So one can swap the two sums in Equation (\ref{finite}) following below.

Define now ``Kraus-operators'' $K_\alpha$ on $\DD_e$  by their matrix-elements
\beq  \label{defkraus}
\langle f_i|K_\alpha|e_k\rangle := \langle \Phi_{i,k}|\alpha\rangle.
\eeq
Then with (\ref{represent}), we have for matrices $\rho$ which are finite linear combinations of $E_{k,\ell}$ that
\beqa\label{finite}
[\Q(\rho)]_{i,j}&=&\sum_{k,\ell}Q_{i,j;k,\ell}\,\rho_{k,\ell}=
\sum_{k,\ell} \sum_\alpha\langle\Phi_{i,k}|\alpha\rangle \rho_{k,\ell}   \langle\alpha|\Phi_{j,\ell}\rangle =
\sum_\alpha \sum_{k,\ell}\ldots =\\
&=&\sum_\alpha\left( K_\alpha \rho K_\alpha^\dag\right)_{i,j}.
\eeqa
%
%
%
%
\begin{cor}\label{krausdomain}
If $\Q$ maps $\DD_e^2$ into $\BB(\HH)$, the Kraus Operators $K_\alpha$ map $\DD_e$ into $\HH$.
\end{cor}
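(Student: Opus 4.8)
The statement to be proven is really the special case of the last sentence of Theorem \ref{krausthm} in which the ``associated operator'' $\breve{\Q}(\rho)$ is simply $\Q(\rho)$ itself, so the plan is to supply that deferred argument. Recall that the generalized Kraus operator $K_\alpha$ is defined on $\DD_e$ only through its matrix elements, Equation (\ref{defkraus}): for fixed $k$, $K_\alpha e_k$ is a priori just the linear functional $f_i\mapsto\langle f_i|K_\alpha|e_k\rangle=\langle\Phi_{i,k}|b_\alpha\rangle$ on $\DD_f$. To say that $K_\alpha$ maps $\DD_e$ into $\HH$ means, by Riesz' theorem and linearity, that for each $k$ and $\alpha$ one has $\sum_i|\langle\Phi_{i,k}|b_\alpha\rangle|^2<\infty$; one then reads off the representing vector, and the claim for a general element of $\DD_e$ follows by forming the corresponding finite linear combination.

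The key point is that, by Proposition \ref{conphi} together with Equation (\ref{matrixT}), the Gram matrix of the family $\{\Phi_{i,k}\}_i$ (for fixed $k$) is exactly the matrix of $\Q(E_{k,k})$ in the basis $\{f_i\}$:
$$
\langle\Phi_{i,k}|\Phi_{j,k}\rangle=\langle f_i|\Q(E_{k,k})|f_j\rangle .
$$
Under the hypothesis of the corollary, $\Q(E_{k,k})\in\BB(\HH)$, and it is a positive operator since $\Q$ is completely positive and $E_{k,k}\geq 0$. Hence the densely defined map $T_k:\DD_f\to\HH\otimes\HH$ given by $T_k f_i:=\Phi_{i,k}$, extended linearly, satisfies, for every finite combination $\eta=\sum_j c_j f_j$,
$$
\|T_k\eta\|^2=\sum_{i,j}\overline{c_i}\,c_j\,\langle f_i|\Q(E_{k,k})|f_j\rangle
=\langle\eta|\Q(E_{k,k})|\eta\rangle\leq\|\Q(E_{k,k})\|\cdot\|\eta\|^2 ,
$$
so $T_k$ extends to a bounded operator on $\HH$, with a bounded adjoint $T_k^\ast$ of the same norm. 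Then
$$
\sum_i|\langle\Phi_{i,k}|b_\alpha\rangle|^2=\sum_i|\langle T_k f_i|b_\alpha\rangle|^2=\sum_i|\langle f_i|T_k^\ast b_\alpha\rangle|^2=\|T_k^\ast b_\alpha\|^2\leq\|\Q(E_{k,k})\|<\infty ,
$$
which is the required finiteness (uniform in $\alpha$, as $\|b_\alpha\|=1$); comparing inner products against every $f_i$ identifies $K_\alpha e_k$ with the vector $T_k^\ast b_\alpha\in\HH$. Extending linearly over $\DD_e$ completes the proof.

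The step that needs the most care is precisely the finiteness estimate: $\Q(E_{k,k})$ is bounded but in general \emph{not} trace class, so the naive termwise bound $|\langle\Phi_{i,k}|b_\alpha\rangle|^2\leq\|\Phi_{i,k}\|^2\leq\|\Q(E_{k,k})\|$ summed over $i$ gives only $\Tr\Q(E_{k,k})$, which may diverge; the convergence of $\sum_i|\langle\Phi_{i,k}|b_\alpha\rangle|^2$ is a genuine Bessel-type phenomenon that has to be routed through the boundedness of the operator $T_k$ built from the $N$-independent vectors of Proposition \ref{conphi}. One must also note explicitly that it is the $N$-independence in (\ref{psiphi}) that licenses using the Gram identity above for \emph{all} index pairs without any restriction, and that only the diagonal entries $\Q(E_{k,k})$ — not the off-diagonal $\Q(E_{k,\ell})$ — are needed for this argument.
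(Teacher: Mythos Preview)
Your proof is correct, but it takes a somewhat different route from the paper's. The paper argues directly from the Kraus representation already established in (\ref{finite}): for any $\phi\in\DD_e$ and any normed $\psi\in\DD_f$,
\[
\|\Q(|\phi\rangle\langle\phi|)\|\ \geq\ \langle\psi|\,\Q(|\phi\rangle\langle\phi|)\,|\psi\rangle
=\sum_\alpha |\langle\psi|K_\alpha|\phi\rangle|^2 ,
\]
so each single term satisfies $|\langle\psi|K_\alpha|\phi\rangle|\leq\|\Q(|\phi\rangle\langle\phi|)\|^{1/2}$, whence $K_\alpha\phi$ is a bounded functional on $\DD_f$ and hence a vector in $\HH$. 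The sum here is over $\alpha$, for fixed $\psi$; you instead sum over the $f$-index $i$, for fixed $\alpha$, routing the estimate through the auxiliary bounded operator $T_k:f_i\mapsto\Phi_{i,k}$ and its adjoint. Both arguments yield the same bound $\|K_\alpha e_k\|^2\leq\|\Q(E_{k,k})\|$. The paper's version is shorter and never returns to the $\Phi$-vectors; yours is more explicit, in that it actually identifies the representing vector as $K_\alpha e_k=T_k^\ast b_\alpha$, and your closing remark about why the naive termwise bound fails (it would sum to $\Tr\Q(E_{k,k})$) is a worthwhile observation that the paper's approach, summing in the other index, simply sidesteps.
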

\begin{proof}
Consider $\phi\in\DD_e$ and form $|\phi\rangle\langle\phi|$ which is in $\DD_e^2$ and is mapped by $\Q$
to an operator with finite norm. For each $\psi\in\DD_f$ which is dense in $\HH$ we have
$$
\|\Q(|\phi\rangle\langle\phi|)\|\cdot \|\psi\|^2  \geq \langle\psi| \Q(|\phi\rangle\langle\phi|) |\psi\rangle =
\sum_\alpha \langle\psi| K_\alpha |\phi\rangle\langle\phi| K^\dag_\alpha |\psi\rangle
= \sum_\alpha | \langle\psi| K_\alpha |\phi\rangle|^2.
$$
So $\|\Q(|\phi\rangle\langle\phi|)\| $ gives a bound for the supremum over all normed $\psi$ and thus an upper bound for the square of
the norm of each $K_\alpha|\phi\rangle$.
\end{proof}
However, note that the $K_\alpha^\dag$ are in general still ``generalized'' operators.
\begin{cor}
For trace preserving maps $\Q$ the $K_\alpha$ can be extended to bounded operators in $\BB(\HH)$,
and $K_\alpha^\dag =K_\alpha^\ast$.
\end{cor}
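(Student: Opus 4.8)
The plan is to read the required bound off from trace preservation applied to positive rank-one matrices. Fix $\phi\in\DD_e$ and set $\rho=|\phi\rangle\langle\phi|\in\DD_e^2$. By Theorem \ref{krausthm} we have $\Q(\rho)=\sum_\alpha K_\alpha\rho K_\alpha^\dag$, so for every element $f_i$ of the basis $\{f_i\}$ whose span is the common form domain $\DD_f$, using the defining relation of $K_\alpha^\dag$ in Definition \ref{genops},
\[
\langle f_i|\Q(\rho)|f_i\rangle=\sum_\alpha\langle f_i|K_\alpha|\phi\rangle\,\langle f_i|K_\alpha|\phi\rangle^{\ast}=\sum_\alpha\bigl|\langle f_i|K_\alpha|\phi\rangle\bigr|^{2}\ \ge\ 0 .
\]
Since $\Q$ preserves the trace and $\rho\ge 0$, summing over $i$ yields $\sum_i\langle f_i|\Q(\rho)|f_i\rangle=\Tr(\rho)=\|\phi\|^{2}<\infty$.

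The one substantial step is the interchange of the two sums. All summands are non-negative, so by Tonelli's theorem
\[
\sum_\alpha\Bigl(\sum_i\bigl|\langle f_i|K_\alpha|\phi\rangle\bigr|^{2}\Bigr)=\sum_i\sum_\alpha\bigl|\langle f_i|K_\alpha|\phi\rangle\bigr|^{2}=\|\phi\|^{2} .
\]
In particular, for each $\alpha$ the inner sum is finite and $\le\|\phi\|^{2}$, which is exactly the statement that the functional $f_i\mapsto\langle f_i|K_\alpha|\phi\rangle$ is represented by a genuine vector $K_\alpha\phi\in\HH$ with $\|K_\alpha\phi\|^{2}\le\|\phi\|^{2}$. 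As $\DD_e$ is dense in $\HH$, each $K_\alpha$ then extends uniquely to a bounded operator on $\HH$ with $\|K_\alpha\|\le 1$; reading the displayed identity backwards and polarizing also gives $\sum_\alpha K_\alpha^{\ast}K_\alpha=\one$ in the strong sense.

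It then remains to identify the generalized adjoint with the Hilbert-space adjoint. Once $K_\alpha\in\BB(\HH)$, the relation $\langle e|K_\alpha^\dag|f\rangle=\langle f|K_\alpha|e\rangle^{\ast}$ of Definition \ref{genops} becomes, for $e\in\DD_e$ and $f\in\DD_f$,
\[
\langle e|K_\alpha^\dag|f\rangle=\overline{\langle f,K_\alpha e\rangle}=\langle K_\alpha e,f\rangle=\langle e,K_\alpha^{\ast}f\rangle .
\]
Hence $K_\alpha^\dag$ and $K_\alpha^{\ast}$ agree as sesquilinear forms on the dense sets $\DD_e\times\DD_f$, and since both are now bounded they coincide as operators: $K_\alpha^\dag=K_\alpha^{\ast}$.

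The main obstacle is not the algebra but making ``trace preserving'' rigorous when $\Q(\rho)$ is a priori only a quadratic form: one must know that the extended trace $\sum_i\Q(\rho)[f_i,f_i]$ of a positive form, evaluated over the form-domain basis, is independent of that basis and equals $\Tr\rho$. Once this is granted --- and for positive forms it is again an application of Tonelli --- the construction of the $K_\alpha$ in Theorem \ref{krausthm} together with the summation interchange above do all the work, and the passage to $\BB(\HH)$ and the identification $K_\alpha^\dag=K_\alpha^{\ast}$ are routine. (Alternatively, one may observe that $\Q(|\phi\rangle\langle\phi|)$ is a positive operator of trace $\|\phi\|^{2}$, hence of operator norm $\le\|\phi\|^{2}$, and then invoke Corollary \ref{krausdomain}.)
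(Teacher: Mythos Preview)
Your argument is correct and follows essentially the same route as the paper: trace preservation forces $\sum_i\sum_\alpha|\langle f_i|K_\alpha|\phi\rangle|^{2}=\|\phi\|^{2}$, the nonnegativity of the summands justifies the interchange, and each $K_\alpha$ is then a contraction. The only cosmetic difference is that the paper phrases the same computation via the auxiliary vectors $\Phi_{i,k}$ from the construction (obtaining $\sum_i\|\Phi_{i,k}\|^{2}=1$ and hence $\sum_\alpha K_\alpha^\dag K_\alpha=\one$), whereas you work directly with the Kraus formula and also spell out the identification $K_\alpha^\dag=K_\alpha^{\ast}$, which the paper simply asserts.
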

\begin{proof}
For arbitrary positive $\rho$ the formula holds for
a sequence of approximating positive $\rho_N\in\DD_e^2$ with finite rank. Moreover, in the limit $N\to\infty$
we have bounded convergence.

Note that the preservation of trace by $\Q$ implies
$$\sum_i \|\Phi_{i,k}\|^2, = 1,\quad\forall\, k\qquad\Rightarrow\qquad
\sum_i \langle\Phi_{i,k}|\Phi_{i,\ell}\rangle = 0,\quad\forall\, k,\ell,$$
therefore,
$$\sum_\alpha K_\alpha^\dag \cdot K_\alpha =\one,\qquad\Rightarrow\qquad
\quad\|K_\alpha\|\leq 1,\quad\forall\,\alpha.
$$
\end{proof}

\subsection{Acknowledgements}

We thank Franco Fagnola, who spotted an error in the first version
and gave to us important hints on relations to existing papers.

One of us (S.A.) has been supported through the FWF-project Nr. P22929-N16.

B.B. thanks V. Umanit\`{a} and the organizing committee of the
Workshop ``Quantum Markov Semigroups:
 Decoherence and empirical estimates'',
 Genova, 26-28 June 2013,
for invitation to this marvelous workshop.
	
We thank also Jan Schlemmer for important hints on methods.


\end{document}